\documentclass[a4paper,USenglish,cleveref, autoref ]{lipics-v2021}

\pdfoutput=1 
\hideLIPIcs  

\usepackage{enumitem}
\usepackage{stmaryrd}

\title{Recognizable Picture Languages: Separating UREC from coUREC via Communication Complexity} 

\titlerunning{Separating UREC from coUREC} 

\author{Antonin Callard}{LIRMM -- CNRS and Université de Montpellier, France}{contact@acallard.net}{}{}

\author{ Andrei Romashchenko}{LIRMM -- CNRS and Université de Montpellier, France}{andrei.romashchenko@lirmm.fr}{}{}

\author{V\'eronique Terrier}{Université Caen Normandie, ENSICAEN, CNRS, Normandie Univ, GREYC; F-14000 Caen, France}{veronique.terrier@unicaen.fr}{}{}

\author{Pascal Vanier}{Université Caen Normandie, ENSICAEN, CNRS, Normandie Univ, GREYC; F-14000 Caen, France}{pascal.vanier@unicaen.fr}{}{}

\authorrunning{Callard, Romashchenko, Terrier, Vanier} 

\ccsdesc[500]{Theory of computation~Models of computation}
\ccsdesc[300]{Theory of computation~Formal languages and automata theory}
\ccsdesc[100]{Mathematics of computing~Discrete mathematics}

\keywords{picture languages, two-dimensional languages, unambiguous tiling systems,  communication complexity} 

\funding{The second author was supported in part by the ANR project FLITTLA (ANR-21-CE48-0023)}

\nolinenumbers 

\newcommand{\bigO}{\mathcal{O}}

\newcommand{\rec}{\mathrm{REC}}
\newcommand{\corec}{\mathrm{coREC}}
\newcommand{\urec}{\mathrm{UREC}}
\newcommand{\courec}{\mathrm{coUREC}}
\newcommand{\loc}{\mathrm{Local}}

\newcommand{\alphabet}[1][A]{\mathcal{#1}}

\RenewDocumentCommand{\square}{O{}}{\begin{tikzpicture}[x=1.4ex,y=1.4ex,baseline=.15ex]
    \ifstrempty{#1}
    {\draw[black!30] (0,0) rectangle (1,1);}
    {\draw[black!30, fill=#1] (0,0) rectangle(1,1);}
  \end{tikzpicture}}
\NewDocumentCommand{\crosssquare}{}{\begin{tikzpicture}[x=1.4ex,y=1.4ex,baseline=.15ex]
    \draw[unarycountercolor] (0,0) rectangle (1,1);
    \draw[unarycountercolor] (0,0) -- (1,1) (0,1) -- (1,0);
  \end{tikzpicture}}

\newcommand{\ncert}{\mathrm{N}^{cert}} 
\newcommand{\concert}{\mathrm{coN}^{cert}} 
\newcommand{\ucert}{\mathrm{UN}^{cert}} 
\newcommand{\coucert}{\mathrm{coUN}^{cert}} 

\newcommand{\ncc}{\mathrm{N}^{cc}} 
\newcommand{\concc}{\mathrm{coN}^{cc}}  
\newcommand{\ucc}{\mathrm{UN}^{cc}} 
\newcommand{\coucc}{\mathrm{coUN}^{cc}}  

\usepackage[dvipsnames]{xcolor}
\usepackage{xspace}
\usepackage{float}

\usepackage{tikz}
\usepackage{pgfplots}

\usetikzlibrary{
  positioning,
  arrows.meta,
  calc,
  matrix,
  arrows,
  shapes,
  patterns,
  automata,
  decorations.pathreplacing,
  decorations.pathmorphing,
  decorations.markings,
  fit,
  shapes.geometric
}

\definecolor{darkgreen}{rgb}{0.,0.4,0.}
\definecolor{darkblue}{rgb}{0.,0.,0.4}
\definecolor{darkorange}{rgb}{0.7,0.4,0}
\colorlet{TilingGrid}{black!15!white}

\newtheorem{question}{Question}

\colorlet{vertcolor}{Turquoise}
\colorlet{midcolor}{brown}
\colorlet{horcolor}{Mulberry}
\colorlet{diagcolor}{LimeGreen}
\colorlet{nodescolor}{Red}
\definecolor{countkcolor}{HTML}{6c40ef}
\colorlet{countkcolor2}{countkcolor!40}
\definecolor{countbcolor}{HTML}{0d2de2}
\definecolor{innerproductcolor}{HTML}{f442b6}
\colorlet{unarycountercolor}{orange}

\usepackage{tikz}

\begin{document}

\maketitle

\begin{abstract}
We introduce communication-complexity lifting techniques into the study of recognizable picture languages.
As an application, we resolve a long-standing open problem of Anselmo et al. (2006) by constructing a language in UREC whose complement does not belong to REC.
Our lower-bound argument is inspired by the communication-complexity approach to unambiguous automata of G\"o\"os et al. (2022),
although its implementation in the setting of picture languages requires substantially different technical ingredients.
\end{abstract}

\newpage

\setcounter{page}{1}

\section{Introduction}

In this work, we study nondeterminism in the theory of recognizable picture languages. In particular, we consider  the implications of restricting nondeterminism to the unambiguous case.

Generally speaking,
nondeterminism is one of the central concepts in theoretical computer science. It can be seen as a form of exhaustive search in which several computational branches are considered at the same time. A basic question is whether such a search can be replaced by an efficient deterministic computation. This question appears in many models of computation and has been widely studied in complexity theory.
When we talk about nondeterminism,
besides the total number of all potentially possible computational branches, another important parameter is the number of \emph{accepting} branches.
Restricting this number to at most one branch leads to the notion of unambiguous computation and to the complexity class $\mathrm{UP}$, which is a natural refinement of $\mathrm{NP}$.
The importance of unambiguity was shown by the famous Valiant--Vazirani theorem~\cite{valvaz86}, which gives a randomized reduction from $\mathrm{NP}$ to problems with a unique witness.
This result shows that unambiguity is a natural restriction of non-deterministic computations and not only an artificial special case.

Similar questions appear in automata theory (see \cite{goldstine2002descriptional} for a survey).
The classical equivalence between deterministic and nondeterministic finite automata shows that nondeterminism does not increase expressive power, but it may significantly reduce the number of states.
The difference between general nondeterministic and unambiguous nondeterministic finite automata is a pretty classical direction of research, see, e.g., \cite{lupanov1963comparison,schmidt1978succinctness,stearns1985equivalence,leung2006structurally}.
Recently, the power of unambiguous nondeterminism in finite automata has again received attention, in particular in~\cite{raskin2018icalp,widdershoven2022unambiguous,goos2022lower}.

Recognizable picture languages are a natural two-dimensional analogue of regular languages. In this setting, the classes $\rec$ and $\urec$ are analogous to nondeterministic and unambiguous finite automata. Although the class $\rec$ and its variants have been studied extensively, the structural consequences of restricting nondeterminism to the unambiguous case are still not well understood. Our goal is to study the relationship between the classes $\rec/\corec$ and $\urec/\courec$ and to clarify the role of unambiguity in the theory of two-dimensional words. 

\subsection{Recognizable picture languages: the formal definitions}

We now recall the formal definitions of tiling recognizable picture languages and their unambiguous variants, as introduced by Giammarresi \& Restivo in~\cite{giammarresi1992recognizable}.


\begin{definition}[picture]
Given an alphabet $\Sigma$, a \emph{picture} of size $m \times n$ is a rectangular pattern $p \in \Sigma^{m \times n}$. Denoting by $\Sigma^{**}$ the set of all pictures over $\Sigma$, a \emph{picture language} is a subset of $\Sigma^{**}$. \\
Let $\#$ be a special symbol not in $\Sigma$. The \emph{bordered picture} of $p$ is the picture $\hat{p}$  of size $(m+2) \times (n+2)$ obtained by surrounding the picture $p$ with $\#$ symbols.
\end{definition}

\begin{definition}[patches, valid picture]
A \emph{patch} is a picture of size $2 \times 2$. Given an alphabet~$\Sigma$ and a set of patches $\Theta$ over $\Sigma\cup \{\#\}$, a picture $p$ is $\Theta$-\emph{valid} if all the subpictures of size~${2 \times 2}$ of the bordered picture $\hat{p}$ belong to~$\Theta$.
\end{definition}
A language that can be described by the set of its $2 \times 2$-subpictures is called \emph{local}:
\begin{definition}[local picture language]
\label{def:local_language}
  A picture language $L$ over $\Sigma$ is \emph{local} if there exists a set of patches $\Theta$ over $\Sigma\cup\{\#\}$ such that $L$ is exactly the set of $\Theta$-valid pictures.
 In this case we write $L=\loc(\Theta)$.
\end{definition}

In the same way as regular languages of words are morphic images of local languages, Giammarresi and Restivo introduced in~\cite{giammarresi1992recognizable}
the class $\rec$ of pictures languages recognized as projection of local tiling systems:
\begin{definition}[projection]
For $\Gamma$ and $\Sigma$ two alphabets and $\pi:\Gamma\rightarrow\Sigma$ a map, the projection of a picture $p\in\Gamma^{m \times n}$ is the picture $\pi(p)\in\Sigma^{m \times n}$ given by $\pi(p)(i,j)=\pi(p(i,j))$. \\
The projection $\pi(L)$ of a language $L$ over $\Gamma$ is the language over $\Sigma$ defined as $\{\pi(p):p\in L\}$.
\end{definition}

\begin{definition}[tiling system]
\label{def:tiling_system}
A tiling system is  a quadruple $\mathcal{T}=(\Sigma,\Gamma,\Theta,\pi)$ where
\begin{itemize}
\item $\Sigma$ and $\Gamma$ are finite alphabets;
\item $\Theta$ is a set of patches over $\Gamma\cup \{\#\}$;
\item $\pi:\Gamma\rightarrow\Sigma$ is a an alphabetic map.
\end{itemize}
The language $L$ recognized by the tiling system $\mathcal{T}$ is the projection of the local language $\loc(\Theta)$,
that is, $L=\pi(\loc(\Theta))$. A language is set to be \emph{recognizable} if it is recognized by some tiling system $\mathcal{T}$.
\end{definition}

\begin{definition}[$\rec$]
  \label{def:rec}
  The family of recognizable languages is denoted by~$\rec$, and the family of complements of recognizable languages by $\corec$.
\end{definition}

\begin{remark}
The choice of sizes $2 \times 2$ for patches in this serie of definitions is rather arbitrary.
Equivalently, $\rec$ can be defined as projections of languages specified by local constraints in terms of \emph{dominos} (i.e., patches of size $2 \times 1$ and $1 \times 2$), see \cite{giammarresi1997two}.
\end{remark}

A tiling system is called unambiguous if every picture has a unique preimage in its corresponding local language.
\begin{definition}[$\urec$]
  A tiling system $\mathcal{T}=(\Sigma,\Gamma,\Theta,\pi)$ that recognizes a language $L$ is \emph{unambiguous} if for every picture $p\in L$ there exists exactly one picture $q\in\loc(\Theta)$ such that $p=\pi(q)$.
  A picture language that is recognized by an unambiguous tiling system is called \emph{unambiguous}.\\
$\urec$  denotes the family of all unambiguous recognizable picture languages, and $\courec$ the family of their complements.
\end{definition}

\subsection{Brief historical survey}

Although various two-dimensional automata and grammars had been studied much earlier,
it was in the 1990s that \emph{recognizable picture languages} emerged as a coherent research area.
In the 1990s, Giammarresi and Restivo introduced the class $\rec$ and laid the foundations of the theory of recognizable picture languages
\cite{giammarresi1992recognizable,giammarresi1997two}.
Inoue and Takanami connected this framework with two-dimensional automata \cite{inoue1994characterization}.
Giammarresi, Restivo, Seibert, and Thomas provided a logical characterization of recognizability in terms of
monadic second-order logic \cite{giammarresi1996monadic}.
Thus, the same class received equivalent descriptions in terms of local tilings, automata, and logic.
Latteux and Simplot established connections between recognizable
picture languages and classical one-dimensional context-sensitive languages
\cite{latteux1997context}, and Matz developed logical and algebraic methods
for obtaining negative results, in particular for proving that certain picture
languages are not recognizable \cite{matz1998piecewise}.

In the 2000s, Anselmo, Giammarresi, Madonia, and Restivo studied determinism and ambiguity in $\rec$ \cite{anselmo2006unambiguous}.
Their further results concerned deterministic and unambiguous subclasses of $\rec$, as well as necessary conditions for recognizability \cite{anselmo2010deterministic,anselmo2010classes}.
We omit several other results from this period which are less directly related to the questions considered in this paper.

These works established $\rec$ as a natural two-dimensional counterpart of the class of regular languages.
However, they also revealed that several notions and properties that are relatively simple in the one-dimensional setting become more subtle and nontrivial in two dimensions.
This concerns, in particular, the relationships between recognizability, determinism, ambiguity, and complementation  \cite{giammarresi1997two,anselmo2006unambiguous}.

\subsection{Major structural questions}

Concerning the relationships between the classes $\rec$, $\corec$, $\urec$, and $\courec$, let us recall two important facts:
\begin{itemize}
\item $\rec$ is not closed under complement, i.e.,
      $\rec\neq\corec$; see~\cite{szepietowski92ota};
\item there exist inherently ambiguous recognizable picture languages, i.e.,
      $\urec\subsetneq\rec$; see~\cite{anselmo2006unambiguous}.
\end{itemize}
These results left open two major questions.
\begin{question}[1994, \cite{potthoff1994nondeterminism,giammarresiRMatrix08}]
Is there a picture language in
\[
(\rec\cap\corec)\setminus(\urec\cup\courec)\,?
\]
\end{question}

\begin{question}[2006, \cite{anselmo2006unambiguous}]
Is $\urec$ closed under complement? Equivalently, is there a picture language in
\(
\urec\setminus\courec\,?
\)
\end{question}

\noindent
Question~1 was answered positively by V\'eronique Terrier in 2019
\cite{terrier2019communication}.
In this paper, we answer Question~2 by showing that $\urec$ is not closed
under complement. In fact, we prove the stronger statement:
\begin{theorem}
\label{th:main}
There is a picture language in $\urec\setminus \corec$.
\end{theorem}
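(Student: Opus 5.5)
The plan is to reduce non-membership in $\corec$ to a communication-complexity lower bound, in the spirit of the lifting approach of G\"o\"os et al. The standard link runs as follows. If $L\in\rec$ is recognized by a tiling system with alphabet $\Gamma$, cut an $m\times n$ picture by a vertical line into a left part $x$ and a right part $y$. A certificate for membership is the column of $\Gamma$-symbols on each side of the cut, i.e.\ $\bigO(m)$ bits. Alice checks local validity on her side, Bob on his side, and the patches crossing the cut are checked by both. Hence $L$ (restricted to size $m\times n$, viewed as a two-party function along the cut) has nondeterministic communication complexity $\ncc \le \bigO(m)$. If the tiling system is unambiguous, this protocol is unambiguous. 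Consequently $\overline{L}\in\rec$ would give $\concc(L)=\bigO(m)$.

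So it suffices to construct $L\in\urec$ such that, for some family of sizes, $\concc(L)=\omega(m)$ with respect to some cut. In fact we may use a cut along any line, or a cut of a designated sub-rectangle, as long as the certificate crossing the cut is $\bigO(\text{cut length})$.

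The construction has two parts.

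\textbf{An inner problem.} Take a one-dimensional or communication problem with a known separation between unambiguous and co-nondeterministic complexity. G\"o\"os et al.\ give a total function $F$ with $\mathrm{UN}^{cc}(F)$ of polylog order but $\concc(F)$ polynomially larger. This comes from lifting a query separation via a gadget such as inner product: an outer Boolean function $f$ whose unambiguous certificate complexity is small but whose $1$-certificate complexity of the negation is large, composed with $\mathrm{IP}$ gadgets.

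\textbf{Embedding in pictures.} Pictures will encode, in an $m\times n$ grid, Alice's input in the left half and Bob's in the right half. The gadget evaluations (inner products between rows on either side) are computed by tiling-based signals: horizontal lines carry row contents across, and diagonal or unary-counter signals compute parities. The outer function's unambiguous certificate is then guessed as a small amount of information per row/block. The key design requirement is unambiguity. Every auxiliary signal must be forced deterministically by the input: inner products are computed by parity propagation, which is deterministic, and the only nondeterministic guess is the unambiguous certificate of $f$, which by definition is unique when it exists. I would also ensure that ill-formed pictures have no valid tiling, or else exactly one that flags the picture as accepted, so that the recognizer stays unambiguous on the whole of $\Sigma^{**}$.

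\textbf{Lower bound.} For the lower bound, the communication complexity must exceed the $\bigO(m)$ cut cost. Because the gadget size is polynomial in $m$ and the lifted $\concc$ is $\mathrm{polylog}$-large in the outer size times the gadget size, I need to tune parameters: outer size $N$, gadget length $b$, picture height $m\approx N$ rows (or $N\cdot$ small). The goal is that the lifted lower bound, roughly $\mathrm{C}^{-}(f)\cdot b$, dominates $m$. Here $\mathrm{C}^{-}(f)$ denotes the certificate complexity of $\neg f$.

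The main obstacle is this parameter balance together with geometry. A tiling's cut certificate scales with the side length, whereas communication lower bounds via lifting lose factors of the gadget size. Standard inner-product lifting needs gadget size $b=\Omega(\log N)$, and the encoding needs $m\gtrsim N$ rows. I would therefore need $\mathrm{C}^{-}(f)\cdot b\gg m$, which forces an outer function whose negated certificate complexity is superlinear in the number of rows used to encode it. First I would try to encode $N^2$ variables in an $N\times N$-ish arrangement (variables indexed by pairs, as in a graph-based function), so that $\mathrm{C}^{-}(f)$ can be $\omega(N)$ while the unambiguous certificate fits in $\bigO(1)$ symbols per cell.
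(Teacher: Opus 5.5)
Your first step is the paper's Lemma~\ref{lem:lower-bound}: a vertical cut of length $m$ yields a nondeterministic protocol of cost $\bigO(m)$, so $\overline{L}\in\rec$ (intersected with the recognizable format language) gives $\concc(F)=\bigO(m)$. Your choice $F=f\circ\mathrm{IP}^N$ with $\ucert(f)\ll\concert(f)$ is also the paper's. The gap is in the construction, and it is more than unfinished parameter tuning: the embedding you describe cannot work.

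You want \emph{all} gadget values computed deterministically by signals crossing the Alice/Bob cut, with only the certificate of $f$ guessed. But a tiling moves only $\bigO(m)$ bits across a cut of length $m$, while evaluating every gadget needs $\Omega(Nb)$ bits. The reason is as follows. On each rectangle of the induced unambiguous protocol, gluing plus unambiguity shows that the left half of the unique tiling depends only on $x$, and the right half only on $y$. So if every $g(x^{(j)},y^{(j)})$ is recorded in the tiling, then $\bigoplus_j g(x^{(j)},y^{(j)})=\mathrm{IP}_{Nb}(x,y)$ has the form $\alpha(x)\oplus\beta(y)$ on that rectangle. By the spectral bound for the Hadamard matrix, such a rectangle has at most $2^{Nb}$ entries. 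Hence $m=\Omega(Nb)$ whenever $F^{-1}(1)$ is not negligibly sparse. Yet trivially $\concc(F)\le Nb$ (Referee announces $x$), so $\concc(F)\gg m$ becomes impossible. Arranging $N^2$ variables in an $N\times N$ pattern fails for the same reason.

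The missing idea is that the tiling must transport and evaluate \emph{only} the blocks named by the guessed clause, of which there are at most $\ucert(f)$. Then the cut carries about $\ucc(F)=\bigO(\ucert(f)\,b)$ bits rather than $Nb$. This is what Layers~4--5 of the paper do.

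A second missing idea removes what you call the main obstacle. Both bounds of the lifting theorem are linear in $b$, and the theorem holds for every $b=\Omega(\log k)$, so nothing forces $b$ to be small. The paper takes $b(k)=2^{2^{\Theta(k)}}$, puts all $2kb$ input bits in the bottom row, and fixes the height $h(k)=\lfloor k^\gamma b(k)\rfloor$ with $\alpha<\gamma<\beta$. Then $\ucc(F_k)=\bigO(k^\alpha b)\ll h\ll k^\beta b\lesssim\concc(F_k)$, and no function with superlinear $\concert$ is needed.

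The huge $b$ also supplies what your uniqueness argument omits. Unambiguity of a DNF only guarantees that at most one clause is satisfied. The tiling must therefore check that the guessed indices and values form a clause of one fixed unambiguous DNF for $f_k$. It must also force $k$, the grid spacing, the height and the midline from the picture's dimensions; otherwise some pictures get several tilings or wrong pictures are accepted. The paper does this with an embedded Turing machine. That machine brute-forces the lexicographically first minimal unambiguous DNF in time $2^{2^{\bigO(k)}}\ll b(k)$, and counts to the fully time-constructible $b(k)$ and $h(k)$. With $b=\Theta(\log N)$ and about $N$ rows such a search does not fit, and you offer no substitute.
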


Our proof of Theorem~\ref{th:main} (as well as the argument from \cite{terrier2019communication}, which gave an answer to Question~1) uses methods of communication complexity.
This means that the question whether a specially constructed language belongs to $\rec$ (or to $\urec$) is reduced to a communication problem,
and it remains to estimate the communication complexity of this problem.
We discuss this technique in more detail in Section~\ref{sec:main-proof}.

More specifically, for the proof of Theorem~\ref{th:main}, we need a communication problem with the following property:
the communication complexity of computing a certain predicate should be \emph{rather high}, while the communication complexity of computing the negation of this predicate should be \emph{rather low}
(in the proof, the words  ``rather high'' and ``rather low''  will get a precise meaning).
It turns out that classical methods of communication complexity, in particular the matrix rank method, do not allow us to solve such a problem.
The issue is that the ranks of a Boolean matrix and of its complement cannot differ significantly.
Therefore, to estimate the communication complexity, we use a different approach --- the lifting technique introduced in \cite{razM1999NChierarchy, goosPW2018lift}.

Let us mention that lower bounds for communication complexity obtained by lifting method have been used, for similar reasons, in the study of unambiguous finite automata, see~\cite{goos2022lower}.
However, our application of this method involves a technical feature\footnote{%
The lifting technique provides a transformation of certificate problem for a function $f(x_1,\ldots,n)$ into a communication complexity problem.
This transformation involves a \emph{gadget}
(a function with particular properties of a randomness extractor; in practice one can take the inner product).
In usual applications of the lifting technique, the length of the gadget's input is logarithmic in $n$;
in our construction, we take a gadget with much longer inputs, see
 Remarks~\ref{rem:log-gadgets} and \ref{rem:log-gadgets-2} and in Section~\ref{sec:construction},
and Section~\ref{sec:comm-compl} for a detailed discussion.%
}  which, to the best of our knowledge, has not appeared in previous applications of the lifting technique in communication complexity.

The rest of the paper is organized as follows. In Section~\ref{sec:2}, we recall the necessary tools from communication complexity, including the lifting technique.
In Section~\ref{sec:construction}, we describe the construction of a language in \(\urec\setminus\courec\). In Section~\ref{sec:proof-negative-claim},
we prove the negative claim, namely, that the constructed language does not belong to \(\courec\), using techniques from communication complexity.
Finally, in Section~\ref{sec:proof-positive-claim-short}, we prove the positive claim, namely, that the constructed language belongs to \(\urec\).
This proof uses fairly standard techniques, including encoding a space-time diagram of a Turing machine by local constraints.

\section{Technical tools}
\label{sec:2}

\subsection{Certificate complexity}
In the classical \emph{decision tree} model of computation  (see, e.g., \cite{buhrman2002complexity}), the measure of complexity of a Boolean function in $n$ variables $f(x_1,\ldots, x_n)$ is the minimum number of input variables
that need to be assigned a value in order to definitely establish the value of  $f$.
This notion is closely connected with the idea of \emph{certificate}  of a Boolean function, which is  a partial assignment  of Boolean values to the variables $(x_1,\ldots, x_n)$ that enforces the value of the function.
\begin{example}
For the function called \emph{majority} of $n$ Boolean arguments
\[
\mathrm{majority}_{n}(x_1,\ldots,x_n) =
\left\{
\begin{array}{rcl}
1,& \text{if $\ge n/2$ values $x_i$ are equal to $1$},\\
0,& \text{if $<n/2$ values $x_i$ are equal to $1$}
\end{array}
\right.
\]
in case $n=7$,
a partial assignment $(1,0,1,*,1,1,*)$ implies that the value of \emph{majority} is equal to $1$,
whatever values we substitute on the place of $*$'s.
Thus, this partial assignment  ``certifies'' that $\mathrm{majority}_7(x_1,\ldots,x_7)$ is equal to $1$.
\end{example}
\emph{Complexity} of such a certificate is the number of evaluated variables, e.g., in the example above, the complexity of a certificate $(1,0,1,*,1,1,*)$ is equal to $5$.
Further, \emph{certificate complexity} of a function $f$  is the minimal $k$ such that one can find a family of certificates of complexity $\le k$ that cover all possible assignments of variables
on which $f$ is true\footnote{%
Technically, the formal definition that we discuss in this section corresponds to the complexity of \emph{non-deterministic} decision trees.}.

Let us proceed with formal definitions. We say that a \emph{clause} is a finite conjunction of literals, where a literal is either a propositional variable or its negation.
The number of literals in a clause is its \emph{width}.
Every clause of width $k$ can be understood as a representation of a Boolean function (in variables $x_1,\ldots,x_n$, where $n\ge k$).
This function is true exactly on an $(n-k)$-dimensional face of the $n$-dimensional Boolean cube and false everywhere else.

A \emph{DNF formula} is a disjunction of such clauses.
Let $D = C_1 \vee \ldots \vee C_m$ be an $n$-variable  DNF formula.
The \emph{width} of this DNF formula is the maximum of the widths of $C_i$ for $i=1,\ldots,m$.
This DNF is called \emph{unambiguous} if for every input $(x_1,\ldots, x_n) \in \{0, 1\}^n$ at most one of the clauses $C_i$ evaluates to true, $C_i(x_1,\ldots,x_n) = 1$.

\begin{definition}
For any Boolean function $f:\{0,1\}^n \to \{0,1\}$ define $\ncert(f)$ as the least $k$ such that $f$ can be written as a DNF of width $k$;
define $\ucert(f)$  as the least $k$ such that $f$ can be written as an unambiguous DNF of width $k$.
Besides,  $\concert(f)$ stands for $\ncert(\neg f)$ and $\coucert(f)$ stands for $\ucert(\neg f)$.
\end{definition}

\begin{example} 
For the disjunction function
\[
\mathrm{OR}_{n}(x_1,\ldots, x_n)
= \left\{
\begin{array}{rl}
1,&\text{if $x_i=1$ for at least one }i \\
0,&\text{otherwise}
\end{array}
\right.
\]
the definition implies
$\ncert(\mathrm{OR}_{n}) = 1$  (it is enough to guess one variable $x_i$ such that $x_i$=1)
and $\ucert(\mathrm{OR}_{n}) = n$ (we have to ask the values of all variables).
\end{example}
We refer the reader to \cite{buhrman2002complexity,Kothari2016,ben2017low} for a more detailed discussion of the definition of certificate complexity and further examples.

\begin{theorem}[\cite{goos2015lower,balodis2023unambiguous}]
  \label{th:goos}
There exist real constants $\alpha<\beta$ and an infinite family of indices ${\cal N}\subset\mathbb{N}$ such that for every $n\in \cal N$ there is a function
$f_n$ defined on all binary strings of length $n$,
\(
f_n:\{0,1\}^{n}\to\{0,1\}
\)
such that
\[
\ucert(f_n)\;\le\;{O}(n^\alpha),
\qquad
\concert(f_n)\;\ge\;\Omega(n^{\beta}).
\]
Moreover, the family $\{f_n\}$ is explicit: there exists a deterministic
algorithm that, given $n$ and $x\in\{0,1\}^{n}$, computes $f_n(x)$
in time $\mathrm{poly}(n)$.
\end{theorem}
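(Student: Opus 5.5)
This statement is cited from the literature~\cite{goos2015lower,balodis2023unambiguous}, so the proof is an assembly of known constructions rather than something new. The plan is to take the separation of unambiguous certificate complexity from co-nondeterministic certificate complexity established there, and to check that the resulting functions are explicit.

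\emph{Base function.} We start from a small base function $g:\{0,1\}^{m}\to\{0,1\}$ with a modest gap. The natural candidate is the G\"o\"os-style ``pointer'' or ``cheat-sheet'' composition. We take $\mathrm{OR}$ (for which $\ucert=m$ but $\ncert=1$) and twist it so that unambiguity is forced. Concretely, a $1$-input of the function carries a unique distinguished witness: a designated block together with a chain of pointers whose validity is locally checkable. The unambiguous DNF simply reads that unique chain. Its clauses are disjoint, because the chain is determined by the input.

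\emph{Lower bound on $\concert$.} A certificate for value $0$ must refute every potential witness location simultaneously. This behaves like certifying $\mathrm{AND}$ of $\mathrm{OR}$s, which forces reading many variables.

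\emph{Iteration to polynomial gap.} To get exponents $\alpha<\beta$, we recursively compose the construction with itself (alternating with $\mathrm{AND}/\mathrm{OR}$ layers), as in~\cite{goos2015lower}. We track two quantities through the composition:
\begin{itemize}
\item $\ucert$, which grows roughly multiplicatively with a small factor per level;
\item $\concert$, which grows with a strictly larger factor per level.
\end{itemize}
After $k$ levels with $n=m^{k}$ variables, this gives $\ucert\le n^{\alpha}$ and $\concert\ge n^{\beta}$. Here $\mathcal N$ is the set of input lengths reachable by the recursion, and one pads to these lengths. The improvement in~\cite{balodis2023unambiguous} refines the per-level constants, but for us only the existence of some $\alpha<\beta$ matters.

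\emph{Main obstacle.} The hardest step is the lower bound on $\concert$ under composition. Co-nondeterministic certificate complexity does not compose multiplicatively in general. The approach I would try first is an adversary/hitting argument: a $0$-certificate of width $w$ fixes $w$ variables, and we exhibit a $1$-input consistent with it by placing a valid witness chain in unfixed blocks. This requires showing that a small partial assignment leaves enough freedom at every level of the recursion. The upper bound on $\ucert$ is routine by comparison, since we just check that the witness clauses stay mutually exclusive after substitution.

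\emph{Explicitness.} Evaluating $f_n(x)$ recursively, with each gate and pointer check being polytime, gives total time $\mathrm{poly}(n)$ because the recursion tree has size $n^{O(1)}$.
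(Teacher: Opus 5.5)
Deferring to \cite{goos2015lower,balodis2023unambiguous} is all the paper itself does, so that part of your plan matches. The reconstruction you sketch in between, however, would not prove the statement, and it puts the difficulty in the wrong places.

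\textbf{The base gadget.} For $\mathrm{OR}_m$ the gap is between $\ncert=1$ and $\ucert=m$, i.e.\ between ambiguous and unambiguous $1$-certificates. The theorem needs $\ucert\ll\concert$, and $\concert(\mathrm{OR}_m)=m=\ucert(\mathrm{OR}_m)$. Forcing a unique witness does not help by itself. Take ``the tournament on $m$ vertices (one bit per edge of $K_m$) has a sink'': the witness is unique and $\ucert=m-1$, yet refuting a candidate sink costs a single edge, so $\concert=m$. Your heuristic that a $0$-certificate ``must refute every potential witness location'' gives a gap only if each refutation is expensive compared with the witness length. You give no gadget with that property. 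Any such gadget is also tightly constrained, since always $\concert(f)\le\ucert(f)^2$ (the clauses of an unambiguous DNF pairwise conflict, which yields a decision tree of depth $\ucert(f)^2$).

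\textbf{The composition step} is the more serious gap. In an unambiguous DNF with two or more satisfiable clauses, every two clauses contain complementary literals, so at most one clause is purely positive. Hence the natural unambiguous DNF for $f_{\mathrm{out}}\circ f_{\mathrm{in}}$ must substitute unambiguous $0$-certificates of $f_{\mathrm{in}}$ for negated literals. The widest of these has width $\coucert(f_{\mathrm{in}})\ge\concert(f_{\mathrm{in}})$, which is precisely the quantity you want to grow. Plain AND/OR layers are the extreme case. By induction, a read-once AND/OR formula on $n$ variables has an odd number of $1$-inputs, so every partition of its $1$-inputs into subcubes contains a singleton, and $\ucert=n\ge\concert$. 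So the claim that $\ucert$ grows by a small factor per level while $\concert$ grows by a larger one is not routine bookkeeping. Controlling it is the substance of the recursive construction in \cite{goos2015lower}. The $\concert$ lower bound, which you leave open (``the approach I would try first''), is the other half of that work.

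As a proof, the proposal therefore does not go through. The right course is the paper's: cite the separation, together with the explicitness of the cited constructions, rather than present this sketch as its derivation.
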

\begin{remark}
From the proof of Theorem~\ref{th:goos} it follows that the set  ${\cal N}$ is polynomial-time computable and, moreover,
there is an explicitly given an polynomial-time computable function $t(k)$ that is a an upper bound for $\ucert(f_n)$ and a lower bound for $\concert(f_n)$, i.e.,
such that for all $n\in \cal N$
\[
\ucert(f_n) \ll t(n) \ll \concert(f_n).
\]
\end{remark}

\begin{remark}
Different concrete instantiations of this theorem (with different constructions of $f_n$ and different parameters $\alpha,\beta$) are given in a  series of works
\cite{goos2015lower,goos2016rectangles,goosPW2018lift,balodis2023unambiguous}.
In \cite{goos2015lower}, one obtains a separation with some constant $\alpha>1$ via a complicated recursive construction.
In \cite{balodis2023unambiguous}, one achieves a near-quadratic separation, i.e., $\alpha/\beta$ can be made arbitrarily close to $1/2$.
\end{remark}

\subsection{Communication complexity}
\label{sec:comm-compl}

Communication complexity studies how much information two (or more) parties need to exchange in order to jointly compute a function when each party holds only part of the input.
The goal is to minimize communication while still producing the correct result.

There are several settings in which the parties act deterministically, non-deterministically, or use randomness; see the comprehensive survey in \cite{kushilevitz1997communication,rao2020communication}.
 In this paper, we study the \emph{non-deterministic} and \emph{unambiguous non-deterministic} versions of communication complexity, see the definitions below.

\begin{definition}[non-deterministic and unambiguous protocols]
 For a function
 \[
 f : X \times Y \to \{0,1\}
 \]
  a \emph{non-deterministic communication protocol of cost} $k$ is a pair of functions
 \[
 A : X \times \{0,1\}^k \to \{0,1\}
\
\text{and}
\
B : Y \times \{0,1\}^k \to \{0,1\}
 \]
 such that for all $(x,y) \in X\times Y$, we have  $f(x,y)=1$ if and only there exists a $z\in \{0,1\}^k$ satisfying
\begin{equation}
\label{eq:cc-def}
A(x,z)=1\ \text{and}\ B(y,z)=1.
\end{equation}
 Such a protocol is called \emph{unambiguous} if for each tuple  $(x,y) \in X\times Y$  such that $f (x, y) = 1$ there is exactly one $z$
 satisfying \eqref{eq:cc-def}.
\end{definition}
Informally, the non-deterministic communication protocols  can be understood as a game of three parties: Alice (who is given an input $x\in X$), Bob (who is given an input $y\in Y$),
and omniscient but untrustworthy Referee (who can access both inputs $x$ and $y$).
The aim of Alice and Bob is to make sure that $f(x,y)=1$.
To help them, Referee broadcasts a $k$-bits message $z$, which is meant to be a certificate proving $f(x,y)=1$.
Alice  ``nods the head'' confirming that certificate $z$ is compatible with $x$;
Bob ``nods the head'' confirming that certificate $z$ is compatible with $y$.
As Alice and Bob both have approved the certificate, one can be sure that indeed $f(x,y)=1$.
The \emph{cost} of the protocol is the length of Referee's message (we do not count the nods by Alice and Bob).

For every $f$ there is a trivial protocol:  Referee broadcasts the entire pair $(x,y)$,
so Alice and Bob learn each other's component of the input, and this is obviously enough to learn the value of $f(x,y)$.
But for some functions $f$ there are much more efficient protocols, like in example below.
\begin{example}
Denote
$
\mathrm{NEQ}_{n} : \{0,1\}^n \times  \{0,1\}^n \to  \{0,1\}
$
the non-equality predicate
\[
\mathrm{NEQ}_{n}(x_1\ldots x_n, y_1\ldots y_n)
= \left\{
\begin{array}{rl}
1,&\text{if there exists $i$ such that}\ x_i\not=y_i, \\
0,&\text{if $x_i=y_i$ for all $i$}.
\end{array}
\right.
\]
Let us consider a protocol where Referee announces the binary expansion of an index $i\in \{1,\ldots,n\}$ and a Boolean value  $b$, which is supposed to be  the value of Alice's input at the $i$-th position
and the negation of the value of Bob's input at the same position.
Alice ``nods'' confirming that $x_i=b$, and Bob ``nods'' confirming that $y_i \not=b$;
this certifies that Alice's and Bob's inputs are not equal to each other, as they are different at least at the $i$-th position.
Cost of this protocol is $\lceil \log n \rceil +1$.
This protocol is \emph{not} unambiguous: if $x$ and $y$  differ in several positions, then Referee can broadcast several different certificates $z$ certifying that $x\not=y$.
\end{example}

\begin{remark}
Let
\(
 f : X \times Y \to \{0,1\}
 \)
be a Boolean function.  Denote by $M^f$ the matrix with $|X|$ rows and $|Y|$ columns such that $M^f_{x,y} = f(x,y)$ for all $x\in X$ and $y\in Y$
(the rows and columns of the matrix are indexed by elements from $X$ and $Y$ respectively).
Let $A\subset X$ and $B\subset Y$ be such that  $M^f_{x,y}  =1$ for all $(x,y)\in A\times B$.
Then the minor of $M^f$ formed by the intersection of rows from $A$ and columns from $B$ is called a \emph{combinatorial $1$-rectangle}.

Observe that  non-deterministic communication protocol of cost $k$ for $f$ induces a \emph{cover} of all $1$-entries of $M^f$ by at most $2^k$ combinatorial $1$-rectangles.
Similarly,  an unambiguous  non-deterministic communication protocol of cost $k$  induces  a \emph{partition} of the $1$-entries of $M^f$ into at most $2^k$ disjoint combinatorial $1$-rectangles.
\end{remark}

\begin{definition}[non-deterministic  communication complexity]
Let   $f : X \times Y \to \{0,1\}$  be a function with binary values.
The non-deterministic communication complexity of this function
is the minimal cost $k$ of a protocol, over all non-deterministic protocols for $f$. This complexity is denoted $\ncc(f)$.

The unambiguous (non-deterministic) communication complexity of  $f$ is the minimal cost $k$ over all
unambiguous  non-deterministic protocols for $f$. This complexity is denoted $\ucc(f)$.

We use the notation $\concc(f)$ and $\coucc(f)$ for $\ncc(\neg f)$ and $\ucc(\neg f)$ respectively.
\end{definition}
\begin{remark}
Equivalently, the non-deterministic communication complexity of $f$ can be defined as logarithm of the \emph{cover number}, i.e., of the minimum size of a cover of the $1$-entries of $M^f$
with combinatorial $1$-rectangles.
Similarly,  the unambiguous non-deterministic communication complexity of $f$ can be defined as logarithm of the \emph{partition number},
i.e., of the minimum size of a partition of the $1$-entries of $M^f$ into disjoint combinatorial $1$-rectangles.
\end{remark}


\begin{example}
Denote
$
\mathrm{EQ}_{n} : \{0,1\}^n \times  \{0,1\}^n \to  \{0,1\}
$
the equality predicate  defined as
\[
\mathrm{EQ}_{n}(x, y)
= \left\{
\begin{array}{rl}
1,&\text{if}\ x=y, \\
0,&\text{otherwise}.
\end{array}
\right.
\]
For this Boolean function 
\(
\ncc(\mathrm{EQ}_{n}) = n
\)
and
\(
\concc(\mathrm{EQ}_{n}) =  \log n +1,
\)
see, e.g., \cite{kushilevitz1997communication}.
\end{example}

\begin{theorem}[Lifting with the inner-product gadget {\cite[theorem~4]{goos2015lower} and \cite{goos2016rectangles}}; see also mentioned as theorem~7 in \cite{goos2022lower}]
\label{th:bounds_for_cc}  
Let
\(
f:\{0,1\}^{k}\to\{0,1\}
\)
be any Boolean function and let $b=b(k)$ satisfy $b=\Omega(\log k)$, with a sufficiently large multiplicative constant. We define
$g:\{0,1\}^{b}\times\{0,1\}^{b}\to\{0,1\}$ by
\[
g(u,v)=\langle u,v\rangle \bmod 2.
\]
Define $F_k=f\circ g^{\,k}$ on inputs $(x,y)\in(\{0,1\}^{b})^{k}\times(\{0,1\}^{b})^{k}$ by
\begin{equation}
\label{eq:def-F}
F(x,y)= f\bigl(g(x^{(1)},y^{(1)}),\ldots,g(x^{(k)},y^{(k)})\bigr),
\end{equation}
where each $x^{(j)}$ and $y^{(j)}$ is an array of $b$ input variables.
Then
\begin{equation}
\ucc(F)\le O\!\bigl(\ucert(f)\cdot b\bigr),
\label{eq:cc-upper-bound}
\end{equation}
and
\begin{equation}
\concc(F)\ge \Omega\!\bigl(\concert(f)\cdot b\bigr).
\label{eq:cc-lower-bound}
\end{equation}
\end{theorem}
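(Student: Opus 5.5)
The two bounds in Theorem~\ref{th:bounds_for_cc} are of very different nature, and I would prove them separately: the upper bound \eqref{eq:cc-upper-bound} by a direct simulation of an unambiguous DNF, and the lower bound \eqref{eq:cc-lower-bound} by the standard rectangle-lifting argument of \cite{goos2015lower,goos2016rectangles}.

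\emph{Upper bound.} Take an unambiguous DNF $D=C_1\vee\cdots\vee C_m$ for $f$ of width $w=\ucert(f)$. Referee announces an index $i$ of a clause $C_i$, and for each of its at most $w$ literals (on a variable $z_j$) the pair of blocks relevant to that coordinate. The cheapest way to do this is for Referee to send Bob's block $y^{(j)}$ ($b$ bits) for every $j$ occurring in $C_i$. Alice then checks that $\langle x^{(j)},y^{(j)}\rangle \bmod 2$ equals the value required by the literal, and Bob checks that the announced blocks coincide with his true $y^{(j)}$.

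The clause index alone may cost $\log m$ bits, and $m$ can be as large as $\binom{k}{w}2^{w}$. I would avoid paying this by letting Referee describe the clause adaptively, viewing the DNF through the variables it queries. More concretely, I would encode the clause by its variable set ($w\log k$ bits) and its signs ($w$ bits). Since $b=\Omega(\log k)$, this is $O(wb)$, so the total cost is $O(w\cdot b)$.

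Unambiguity also has to be checked. Fix $(x,y)$ and write $z=g^k(x,y)$. An accepted message must name a clause $C_i$ true on $z$, because Alice verifies the gadget values and Bob verifies the transmitted blocks. By unambiguity of $D$ that clause is unique, and the blocks are then forced to be the true ones, so at most one message is accepted. Conversely, if $F(x,y)=1$ some clause is satisfied by $z$, so some message is accepted. Hence this is an unambiguous protocol for $F$ of cost $O(\ucert(f)\cdot b)$.

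\emph{Lower bound.} Suppose $\neg F$ has a cover by $2^{c}$ rectangles on which $F=0$, with $c$ small compared with $\concert(f)\cdot b$. The plan follows the simulation of \cite{goos2015lower}: from such a cover, extract a certificate of width $O(c/b)$ for every $0$-input $z$ of $f$, which contradicts the definition of $\concert(f)$.

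For a fixed $z$ with $f(z)=0$, consider the uniform distribution on $g^{-k}(z)$. Some rectangle $R=A\times B$ of the cover carries at least a $2^{-c}$ fraction of this mass. Let $J$ be the set of coordinates on which $A$ or $B$ is far from ``blockwise dense'', that is, where the min-entropy deficiency on the coordinates in $J$ exceeds roughly $b/2$ per coordinate. A density-restoring partition argument shows $|J|=O(c/b)$. The key extractor property of the inner product is then: if $X,Y$ are blockwise dense outside $J$, then $g^{\bar J}(X,Y)$ hits every assignment $z'_{\bar J}$ with positive probability, indeed nearly uniformly.

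It follows that for every $z'$ agreeing with $z$ on $J$ there is a point of $R$ mapping to $z'$. Since $F=0$ on $R$, we get $f(z')=0$, so $z|_J$ is a $0$-certificate of width $O(c/b)$. Consequently $c=\Omega(\concert(f)\cdot b)$.

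The main obstacle is the lower bound, specifically the density-restoring partition together with the requirement $b\ge C\log k$. This requirement is what makes the inner product an extractor uniformly across all $k$ blocks, with union bounds over $k$ coordinates. I would follow \cite{goos2016rectangles} closely here, quoting the ``full-range'' lemma for blockwise-dense rectangles rather than reproving it.
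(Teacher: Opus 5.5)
Your proposal follows the paper's proof. The upper bound is the same explicit protocol: you send Bob's blocks and let Alice evaluate the gadgets, the mirror image of the paper's choice, and your unambiguity argument is correct. For the lower bound the paper simply invokes \cite[Theorem~2]{goos2016rectangles}, which is the argument you outline and defer to.

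Two small corrections. First, the clause index already costs only $\log m = O(w\log k) = O(wb)$, so there is nothing to ``avoid'' there. Second, in your lower-bound outline, $J$ should be read as the fixed coordinates of a single piece $A'\times B'\subseteq R$ of the density-restoring partition, on which $g^J$ is constant and equal to $z_J$ and both sides are dense outside $J$. It is not a property of $A$ and $B$ themselves, and without this step the inference ``every $z'$ agreeing with $z$ on $J$ is hit by $R$'' does not follow.
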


The upper bound \eqref{eq:cc-upper-bound} is straightforward. Indeed, one can construct a suitable communication protocol explicitly:
Referee chooses  one clause from an unambiguous DNF for $f$ (this clause is supposed to certify that the value of $f$ is \emph{True}),
and  broadcasts the chosen clause  together with the Boolean values for the arrays of variable $x^{(j)}$  involved in this clause;
then Alice confirms that the broadcasted values of variables are correct, while Bob computes the values of the gadgets $g(x^{(j)},y^{(j)})$ and confirms that the value of $F$ is  \emph{True}.
The lower bound \eqref{eq:cc-lower-bound} is a much more involved result; it follows from~\cite[Theorem~2]{goos2016rectangles}.

\begin{remark}
\label{rem:log-gadgets}
In the literature this theorem is usually formulated and applied with $b = \Theta(\log k)$ but the proof works for $b= \Omega(\log k)$.
We use this theorem in the setting where $b$ is huge compared with $k$.
\end{remark}

\begin{corollary}[Communication separation for lifted functions]
\label{cor:goos}
Let ${\cal K}\subset \mathbb{N}$ be a set of natural numbers, and let
$\{f_k\}_{k\in{\cal K}}$ be a family of Boolean functions
\(
f_k : \{0,1\}^k \to \{0,1\}
\)
such that for some $\alpha<\beta$
\[
\ucert(f_k)\;\le\;{O}(k^\alpha),
\qquad
\concert(f_k)\;\ge\;\Omega(k^\beta).
\]
Let $b=b(k)$ satisfy $b=\Omega(\log k)$, and let
$g:\{0,1\}^{b}\times\{0,1\}^{b}\to\{0,1\}$ be the inner-product gadget
$g(u,v)=\langle u,v\rangle\bmod 2$.
Define
\(
F_k := f_k\circ g^{\,k}
\)
as in \eqref{eq:def-F},
with one-party input length
\(
n: = k\cdot b(k).
\)
Then
\[
\ucc(F_k)\;\le\;{O}\!\bigl(k^\alpha\cdot b(k) \bigr),
\qquad
\concc(F_k)\;\ge\;\Omega\!\bigl(k^{\beta}\cdot b(k) \bigr).
\]
\end{corollary}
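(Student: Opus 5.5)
This corollary is a direct instantiation of Theorem~\ref{th:bounds_for_cc}. For each $k\in{\cal K}$, take $f=f_k$ and the gadget length $b=b(k)$. The corollary assumes $b=\Omega(\log k)$, and I read that in the same sense as the theorem: with a sufficiently large multiplicative constant. Then the hypotheses of Theorem~\ref{th:bounds_for_cc} hold, and $F_k=f_k\circ g^{\,k}$ is exactly the function $F$ of \eqref{eq:def-F}. Each party holds $k$ blocks of $b$ bits, so the one-party input length is $n=k\cdot b(k)$. This fixes the notation and needs no further work.

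For the upper bound, apply \eqref{eq:cc-upper-bound} to get $\ucc(F_k)\le O(\ucert(f_k)\cdot b)$. Substituting the hypothesis $\ucert(f_k)\le O(k^\alpha)$ gives $\ucc(F_k)\le O(k^\alpha\cdot b(k))$. The two $O$-constants multiply, and both are independent of $k$. This also follows directly from the explicit protocol described after Theorem~\ref{th:bounds_for_cc}. The Referee names a clause of an unambiguous DNF of width $O(k^\alpha)$ together with Alice's $b$-bit blocks for the variables in that clause. This costs $O(k^\alpha b)$ bits, plus the bits needed to name the clause. Naming the clause can be absorbed by having the Referee announce the values of the clause's variables rather than an index. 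Unambiguity of the protocol is inherited from unambiguity of the DNF.

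For the lower bound, apply \eqref{eq:cc-lower-bound} to get $\concc(F_k)\ge\Omega(\concert(f_k)\cdot b)$. Combined with $\concert(f_k)\ge\Omega(k^\beta)$, this gives $\concc(F_k)\ge\Omega(k^\beta\cdot b(k))$.

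The only point needing care is uniformity. The constants hidden in Theorem~\ref{th:bounds_for_cc} must not depend on $b$, since the paper intends $b(k)$ to be huge compared with $k$. I would justify this by Remark~\ref{rem:log-gadgets}: the lifting argument of \cite{goos2016rectangles} works for every $b\ge C\log k$ with the same absolute constants. Granting that, the corollary follows by chaining the two inequalities, and there is no real obstacle beyond this bookkeeping.
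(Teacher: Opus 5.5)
Your proposal is correct and matches the paper. The paper gives no separate proof: the corollary is obtained by substituting the bounds on $\ucert(f_k)$ and $\concert(f_k)$ into Theorem~\ref{th:bounds_for_cc}, relying on Remark~\ref{rem:log-gadgets} for the bound when $b$ is large, exactly as you do. One small correction to your optional protocol aside: naming the clause costs $O(k^\alpha\log k)$ bits however it is encoded (list of indices and polarities, or an index into the DNF), and this cost fits within $O(k^\alpha b)$ because $b=\Omega(\log k)$, not because an index is avoided.
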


\begin{remark}
The famous separation for the problem {\sc Clique vs. Independent Set} is proven in \cite{goos2015lower} in the setting where $\ucc$ is only logarithmic in one-party input length.
\end{remark}

\begin{remark}
\label{rem:log-gadgets-2}
The smaller the parameter $b=b(k)$, the bigger the relation between the lower and the upper bound in Corollary~\ref{cor:goos}.
In the field of communication complexity (see {\cite{goos2015lower,goos2016rectangles,goos2022lower}}),
the authors employ this machinery with the smallest possible  $b(k)$. So they take $b =  \lambda \log k$ for some large enough constant $\lambda$
(this is the smallest $b$ for which the argument still works).
We, on the contrary, need a huge $b(k)$.
\end{remark}

\section{Proof of the main result}
\label{sec:main-proof}

\subsection{Construction}
\label{sec:construction}

In order to prove \Cref{th:main}, we will create a picture language from the communication problem given by \Cref{cor:goos}. First of all, we define a simple scheme that assigns picture languages to families of Boolean functions.

\begin{definition}
\label{def:Lfh}
For a family \( \mathbf{F} = \{F_k\}_{k \in \mathcal{K}} \) of functions \( {F_k \colon \{0,1\}^{n(k)} \times \{0,1\}^{n(k)} \to \{0,1\}} \), and a function \( h \colon \mathbb{N} \to \mathbb{N} \), we define the picture language \(L_{\mathbf{F},h}\) over the alphabet ${\alphabet = \{0,1,\Box\}}$ composed of patterns $w$ on $2n(k)$ columns and $h(k)$ rows such that (see~\Cref{fig:generic-pattern}):
\begin{itemize}
\item the bottom row of $w$ consists of $2\cdot n(k)$ cells that can be either $0$ or $1$;  these cells represent two bit strings $x_1\ldots x_{n(k)}$ and $y_1\ldots y_{n(k)}$ such that
$F_k(x_1\ldots, x_{n(k)}, y_1\ldots y_{n(k)}) =1$;
\item all cells of the other $h-1$ rows of the patterns are blank symbols $\Box$.
\end{itemize}
\end{definition}

\noindent If we consider the communication problem in which Alice and Bob are each given a pattern of size $n(k) \times h(k)$, and want to check if the concatenation of these patterns forms a picture in \(L_{\mathbf{F},h}\), we essentially re-obtain the communication problem given by the family of functions \( \mathbf{F} \).
Indeed, it reduces to giving Alice and Bob the bottom rows of their patterns, and computing the value of the corresponding Boolean function \(F_k\) on these inputs.

\begin{figure}[htb]
  \begin{center}

\begin{tikzpicture}[
    scale=0.85,
    transform shape,
    line cap=round,
    line join=round,
    font=\small
  ]

  \def\N{32}       
  \pgfmathtruncatemacro{\NplusOne}{\N+1}
  \pgfmathtruncatemacro{\TwoN}{2*\N}
  \def\H{4.0}      
  \def\sq{0.13} 
  \def\s{0.20}     
  \def\eps{0.04}     

 \pgfmathsetmacro{\W}{2*\N*\s}

  \fill[gray!15]
  (0,0) rectangle (\W,\H);

  \draw[<->]
    (0+\eps,-0.45) -- ({\W/2-\eps},-0.45)
    node[midway,fill=white,inner sep=1pt] {\(n\)};

  \draw[<->]
    ({\W/2+\eps},-0.45) -- (\W-\eps,-0.45)
    node[midway,fill=white,inner sep=1pt] {\(n\)};

  \draw[
    decorate,
    decoration={brace, amplitude=3pt}
  ]
    (0.1,0.35) -- ({\W/2-0.1},0.35)
    node[midway, yshift=10pt]
    {bits \(x_1\ldots x_n\)};

  \draw[
    decorate,
    decoration={brace, amplitude=3pt}
  ]
    ({\W/2+0.1},0.35) -- ({\W-0.1},0.35)
    node[midway, yshift=10pt]
    {bits \(y_1\ldots y_n\)};

  \draw[<->]
    (\W+0.45,\eps) -- (\W+0.45,\H-\eps)
    node[midway,fill=white,inner sep=1pt] {\(h\)};

  \draw[thin] (0,0) -- (0,-0.6);
  \draw[thin] ({\W/2},0) -- ({\W/2},-0.6);
  \draw[thin] (\W,0) -- (\W,-0.6);

  \draw[thin] (\W,0) -- (\W+0.6,0);
  \draw[thin] (\W,\H) -- (\W+0.6,\H);

  \draw[black, ultra thick]
    (0,0) rectangle (\W,\H);

  \draw[black, thick, dashed]
    ({\W/2},0) -- ({\W/2},\H);

  \foreach \i in {1,...,\N} {
  \fill[green!60!black]
    ({(\i-1)*\s+0.1-\sq/2}, {0.1-\sq/2})
    rectangle
    ({(\i-1)*\s+0.1+\sq/2}, {0.1+\sq/2});
  }

  \foreach \i in {\NplusOne,...,\TwoN} {
    \fill[green!60!black]
    ({(\i-1)*\s+0.1-\sq/2}, {0.1-\sq/2})
    rectangle
    ({(\i-1)*\s+0.1+\sq/2}, {0.1+\sq/2});
  }

\end{tikzpicture}
  \end{center}
  \caption{Blueprint of a pattern of size \( 2n \times h \) for \(n := n(k)\) and \(h := h(k)\). The two blocks of green dots represent Alice's bits \(\mathbf{x}=x_1\ldots x_n\) and Bob's \(\mathbf{y}=y_1\ldots y_n\). The pattern belongs to the picture language \( L_{\mathbf{F},h} \) iff \( F_k(\mathbf{x},\mathbf{y})=1 \).}
\label{fig:generic-pattern}
\end{figure}

In this section, we fix the family \( \mathbf{F} = \{F_k\}_{k \in \mathcal{K}} \) and the function \( h \colon \mathbb{N} \to \mathbb{N} \) as follows:

\smallskip

\noindent
\textbf{Fixing principal parameters of the construction.}
\vspace{-0.5em}
\begin{itemize}\label{fixed-parameters}
\item We set an explicit family \( \mathbf{f} = \{ f_k \}_{k \in \mathcal{K}} \) given by~\Cref{th:goos}, \textit{i.e.}~whose complexity satisfies \( \ucert(f_k) = \bigO(k^{\alpha}) \) and \( \concert(f_k) = \Omega(k^{\beta}) \) for some \(\alpha < \beta\).
\item We fix \(b(k)\) such that $b(k) = 2^{2^{\Theta(k)}}\) and  $k \mapsto k \cdot b(k)$ is injective, and define \(n(k) := k \cdot b(k)\).
\item We define the family \( \mathbf{F} = \{F_k\}_{k \in \mathcal{K}} \) as \( F_k = f_k \circ g^{k} \), where \(g\) the inner-product gadget on bit-strings of length \(b(k)\).
\item Finally, we fix a function \(h(k): =  \lfloor k^\gamma \cdot b(k) \rfloor \)
such that
\[
\ucc(F_k)  \ll h(k) \ll \concc(F_k).
\]
Without loss of generality we may assume that $b(k)$ and $h(k)$ are fully time-constructible functions\footnote{%
A function $f\colon\mathbb{N}\to\mathbb{N}$ is called fully time-constructible if there exists a Turing machine that on every input of length $n$
runs for exactly $f(n)$ steps, see, e.g., \cite{homeri2001computability}.
If a function $f\colon\mathbb{N}\to\mathbb{N}$ is computable in time $O(f(n))$ and satisfies $f(n)\geq (1+\varepsilon)n$ for some $\varepsilon>0$ and all sufficiently large $n$,
then it is fully time-constructible~\cite{kobayashi1985proving}. In particular, most commonly used superlinear functions,
such as $f(n)=n^k$ for $k=2,3,\ldots$, $f(n)=2^n$, or $f(n)=2^{2^n}$, are fully time-constructible.%
}.
\end{itemize}
By interpreting the patterns of the resulting language \(L_{\mathbf{F},h}\) as instances of a communication problem \(F_k \colon \{0,1\}^{n(k)} \times \{0,1\}^{n(k)} \to \{0,1\}\), we obtain the following results:
\begin{itemize}
\item In Section~\ref{sec:proof-negative-claim} we prove that \(L_{\mathbf{F},h}\) is not in \(\corec\). This is the main contribution of the paper.
  The proof is based upon the fact that \( \ncc(\neg F_k) \gg  h(k)\), so that the frontier between Alice's and Bob's parts of the patterns (dashed in~\Cref{fig:generic-pattern}) is too small to encode
communication transcripts.
\item In \Cref{sec:proof-positive-claim-short} we prove that \(L_{\mathbf{F},h}\) is in \(\urec\). The proof is based upon the fact that \(\ucc(F_k) \leq h(k)\). To this end, we construct a tiling system that ``geometrically'' simulates a specific communication protocol for \(\{ F_k \}_{k \in \mathcal{K}}\) between Alice and Bob.
\end{itemize}

\subsection{\boldmath Proof that \(L_{\mathbf{F},h}\) cannot be in \(\corec\)}
\label{sec:proof-negative-claim}

\begin{lemma}
\label{lem:lower-bound}
Let $\mathbf{F} = \{ F_k \}_{k \in \mathcal{K}}$ be a family of Boolean functions
\[
F_k : \{0,1\}^{n(k)} \times \{0,1\}^{n(k)} \to \{0,1\},
\]
and
\(
h: \mathbb{N} \to \mathbb{N}
\)
be a function on natural numbers. Let $L_{\mathbf{F},h}$ be the picture language defined for these $F_k$ and $h$, as explained in Section~\ref{sec:construction}. If $L_{\mathbf{F},h}$ is in $\rec$, then the non-deterministic communication complexity of $F_k$ is at most $O(h(k))$.
\end{lemma}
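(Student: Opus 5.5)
We turn a tiling system for $L_{\mathbf{F},h}$ directly into a non-deterministic protocol whose certificate is the content of the column boundary between Alice's and Bob's halves. Fix a tiling system $\mathcal{T}=(\Sigma,\Gamma,\Theta,\pi)$ with $\pi(\loc(\Theta))=L_{\mathbf{F},h}$, and let $c=|\Gamma\cup\{\#\}|$, a constant independent of $k$. Given inputs $x,y\in\{0,1\}^{n(k)}$, let $w(x,y)$ be the $2n(k)\times h(k)$ picture whose bottom row is $xy$ and whose other cells are $\Box$. By definition, $F_k(x,y)=1$ iff $w(x,y)\in L_{\mathbf{F},h}$, i.e.\ iff there is a $\Theta$-valid $q\in\Gamma^{2n\times h}$ with $\pi(q)=w(x,y)$.

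\emph{Protocol.} Referee sends the two columns of $\hat q$ adjacent to the vertical cut: columns $n$ and $n+1$ of $q$ (indexing $q$'s columns $1,\dots,2n$), each extended by the border symbol $\#$ at top and bottom. This is at most $2(h(k)+2)\lceil\log_2 c\rceil=O(h(k))$ bits. Alice accepts iff there is an assignment of the left part (columns $1,\dots,n$ of $q$) that satisfies three conditions:
\begin{itemize}
\item it agrees with the announced column $n$;
\item it projects under $\pi$ onto the left half of $w(x,y)$, which she knows from $x$;
\item together with the $\#$-border and the announced column $n+1$, all of its $2\times2$ windows in $\hat q$ lie in $\Theta$.
\end{itemize}
Bob does the symmetric check for the right part, using columns $n$ and $n+1$ and his input $y$.

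\emph{Correctness.} Every $2\times2$ window of $\hat q$ spans two consecutive columns. So each window either lies entirely among columns $0,\dots,n+1$ of $\hat q$ (the left border column, Alice's columns, and the announced column $n+1$), or entirely among columns $n,\dots,2n+1$ (the announced column $n$, Bob's columns, and the right border column). The window straddling the cut, between columns $n$ and $n+1$, is checked by both parties, which is harmless.

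Suppose $F_k(x,y)=1$. Referee takes a valid preimage $q$ and sends its two cut columns. Both parties can use the corresponding halves of $q$ as witnesses, so both accept.

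Conversely, suppose both accept some message. Take Alice's witness for columns $1,\dots,n$ and Bob's witness for columns $n+1,\dots,2n$. Alice's witness agrees with the announced column $n$, and Bob's agrees with the announced column $n+1$. Gluing them gives a picture $q'$ in which every $2\times2$ window of $\hat q'$ lies in a region one of the parties has verified, since those regions use exactly the announced columns. Hence $q'$ is $\Theta$-valid and $\pi(q')=w(x,y)$, so $w(x,y)\in L_{\mathbf{F},h}$ and $F_k(x,y)=1$. This gives $\ncc(F_k)\le O(h(k))$.

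The main point needing care is the locality argument in the gluing step. One must check that every window of the bordered picture, including those touching the $\#$ frame at the corners of the cut, is verified by one of the parties against the same announced columns. Announcing both cut columns, rather than one, makes this clean. Note that $n(k)$ is huge compared with $h(k)$, but this does not matter: the parties' local computations are unbounded, and only the length of Referee's message counts. The constant hidden in $O(\cdot)$ depends only on $|\Gamma|$.
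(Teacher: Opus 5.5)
Your proposal is correct and follows essentially the same argument as the paper. In both, Referee announces the two columns of a valid preimage straddling the cut (the paper's pattern $Z$ of size $2\times(h+2)$), Alice and Bob each check that these columns extend to a locally valid half compatible with their own input, and the gluing step uses the fact that every $2\times 2$ window of the bordered picture lies in one of the two verified regions, so the message costs $O(h(k))$ bits. Your bookkeeping of which windows each party checks (columns $0,\dots,n+1$ versus $n,\dots,2n+1$ of $\hat q$) is, if anything, slightly cleaner than the paper's.
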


\begin{remark}
Such a condition was first observed in~\cite{anselmo2013stronger}. 
It was also an important ingredient of the argument in \cite{terrier2019communication}.
\end{remark}

\begin{proof}Assume for a picture language  $L_{\mathbf{F},h}$ there is a tiling system in the sense of Definition~\ref{def:rec}.
This means that there exists a local picture language $\hat L$ over a finite alphabet $\Sigma$ and a function
\(
\pi : \Sigma \to \{0,1,\#\}
\)
such that any rectangular pattern $P$
(composed of zeros and ones) belongs to $L_{\mathbf{F},h}$ if and only if there exists a $Q \in \hat L$ such that the bordered picture of $P$ is a letter-wise $\pi$-projection of $Q$.
Let us build a non-deterministic communication protocol  computing the functions $F_k$.

\smallskip

\emph{Scheme of a non-deterministic communication protocol.}
We define a non-deterministic communication protocol as follows.
Let $(x,y)\in \{ 0,1\}^n \times \{0,1\}^n$ be a pair of strings and
let $P_{x,y}$ be the pattern of size $(2n) \times h(n)$ as shown in Figure~\ref{fig:generic-pattern}
(the colors in the bottom row of the rectangle represent the bits of $x$ and $y$, and the other cells of the rectangle are ``white'').
The communication protocol is defined as follows:
\begin{itemize}
\item Referee chooses a pattern $Q$ in $\hat L$ such that the letter-wise $\pi$-projection of $Q$ is the bordered picture of $P_{x,y}$.
Then Referee broadcasts a pattern $Z$ of size $2\times (h+2)$ (two columns of height $h$)  over alphabet $\Sigma$;
\item Alice accepts Referee's message if $Z$ satisfies the local constraints (i.e., all $2\times 2$ patterns in $Z$ are valid), and
there exists a pattern $Q_L$ of size $(n+1)\times (h+2)$ such that
\begin{itemize}
\item  $Q_L$  satisfies the local constraints of $\hat L$,
\item  $Q_L$  is compatible with $Z$,  which means that the \emph{rightmost} column of $Q_L$  coincides with the \emph{left} column of $Z$, as shown in Figure~\ref{fig:from-tiling-to-cc},
\item $\pi$-projection of  $Q_L$ is compatible with Alice's input $x$
(which means that the bits of $x$ are written in the bottom row of the $\pi$-projection of  $Q_L$, and the left, the top, and the bottom lines consist of $\#$);
\end{itemize}
\item Bob accepts referee's message if there exists  a pattern $Q_R$ of size $n\times h$ such that
\begin{itemize}
\item  $Q_R$  satisfies the local constraints of $\hat L$,
\item  $Q_R$  is compatible with $Z$, which means that the \emph{right} column of $Z$ coincides with the \emph{leftmost} column of $Q_R$, as shown in Figure~\ref{fig:from-tiling-to-cc},
\item $\pi$-projection of  $Q_R$ is compatible with Bob's input $y$
(which means that the bits of $y$ are written in the bottom row of the $\pi$-projection of  $Q_R$, and the right, the top, and the bottom lines consist of $\#$).
\end{itemize}
\end{itemize}
(Observe that $Q_L$  and $Q_R$ compatible with the broadcasted $Z$ may be not unique, and the concatenation $Q_LQ_R$ may be different from the pattern $Q$ chosen by Referee.)

\smallskip

\emph{Protocol accepts all pictures from the language.}
This direction of the proof is straightforward.
If  $F_k(x,y) = 1$, then there exists a locally correct pattern $Q$ such that
 $P_{x,y}$ is  a letter-wise $\pi$-projection of $Q$.
 Referee broadcasts the two central columns of this $Q$, and  Alice and Bob certify that this message is compatible with their inputs $x$ and  $y$,
 since the left and the right halves of $Q$ can serve as $Q_L$ and $Q_R$ respectively, see Figure~\ref{fig:from-tiling-to-cc}.

\smallskip

\emph{Protocol accepts only pictures from the language.}
On the other hand, if Referee broadcasts any message $Z$ that is accepted by Alice and Bob, then this $Z$ can be extended to a valid pattern $Q\in \hat L$ whose $\pi$-projection is $P_{x,y}.$
Indeed,  if Alice and Bob  accept Referee's message $Z$, then there exist $(n\times h)$  patterns $Q_L$  and $Q_R$ that extend  this $Z$ to the left and to the right respectively,
and that respect the local constraints.
So one may combine  $Q_L$  and $Q_R$ in one locally-consistent pattern $Q$ of size $(2n+2)\times (h+2)$ such that
\begin{itemize}
\item the two central columns of $Q$ are exactly the two columns from $Z$,
\item every $2\times 2$ pattern in $Q$ respects the local constraints of $\hat L$  (since every $2\times 2$ pattern in $Q$ belongs to either $Q_L$, or $Q_R$, or $Z$),
\item the $\pi$-projection of $Q$ is the bordered pattern of $P_{x,y}$.
\end{itemize}
Therefore, the communication protocol admits exactly the pairs $(x,y)$ such that
$P_{x,y} \in L_{\mathbf{F},h}$ and, therefore, $F_k(x,y)=1$.

By construction, communication complexity of the protocol is $O(h)$, since Referee's message consists of $2(h+2)$ letters from a finite alphabet $\Sigma$.
\end{proof}

\begin{figure}[bth]
\begin{center}
\begin{tikzpicture}[scale=0.45]
  \def\n{8}
  \def\h{6}

  \fill[blue!20] (0,0) rectangle (\n,\h);
  \fill[orange!25] (\n,0) rectangle (2*\n,\h);

  \filldraw[
    fill=green!70,
    fill opacity=0.3,
    draw=green!70!black,
    line width=1pt,
    rounded corners=0.25
  ]
    (\n-1-0.15,-0.15)
    rectangle
    (\n+1+0.15,\h+0.15);

  \draw[step=1, gray!60, thin] (0,0) grid (2*\n,\h);

  \draw[black, thick] (0,0) rectangle (2*\n,\h);



\draw[<->] (1.1,-1) -- (\n-0.1,-1)
  node[midway,fill=white,inner sep=1pt] {$n$};

\draw[thin] (\n,0) -- (\n,-1.2);

\draw[<->] (\n+0.1,-1) -- (2*\n-1.1,-1)
  node[midway,fill=white,inner sep=1pt] {$n$};


\draw[<->] (2*\n+1,1.1) -- (2*\n+1,\h-1.1)
  node[midway,fill=white,inner sep=1pt] {$h$};


\node[color=darkblue] at (0.25*\n,\h+1.8) {Alice's $Q_L$};
\draw[->, thick]
  (0.25*\n,\h+1)
  -- (0.35*\n,0.6*\h);

\node[color=darkorange] at (1.75*\n,\h+2) {Bob's $Q_R$};
\draw[->, thick]
  (1.75*\n,\h+1)
  -- (1.65*\n,0.6*\h);

\node[color=darkgreen] at (\n,\h+3) {referee's $Z$};
\draw[->, thick]
  (\n,\h+2)
  -- (\n-0.5,\h+0.15);

  \foreach \i in {1,...,\n} {
  \node at (\i -0.5,\h-0.5) { \tiny $\#$};
  \node at (\i -0.5+\n,\h-0.5) { \tiny $\#$};
  }

  \foreach \i in {1,...,\n} {
  \node at (\i -0.5,0.5) { \tiny $\#$};
  \node at (\i -0.5+\n,0.5) { \tiny $\#$};
  }

  \foreach \i in {2,...,\h} {
  \node at (0.5,\i-0.5) { \tiny $\#$};
  }

  \foreach \i in {2,...,\h} {
  \node at (\n+\n-0.5,\i-0.5) { \tiny $\#$};
  }

\end{tikzpicture}
\caption{The patterns from the proof of Lemma~\ref{lem:lower-bound}.}
\label{fig:from-tiling-to-cc}
\end{center}
\end{figure}

Lemma~\ref{lem:lower-bound} implies immediately the following corollary, which means that \(L_{\mathbf{F},h}\) constructed in Section~\ref{sec:construction}, is not in \(\corec\).
This concludes the proof of our main result.
\begin{corollary}
Let $\mathbf{F} = \{ F_k \}_{k \in \mathcal{K}}$ be a family of Boolean functions
\[
F_k : \{0,1\}^{n(k)} \times \{0,1\}^{n(k)} \to \{0,1\},
\]
and
\(
h: \mathbb{N} \to \mathbb{N}
\)
be a function on natural numbers, and
let  \(L_{\mathbf{F},h}\) be the picture language from Definition~\ref{def:Lfh}.
Assume that $\concc(F_k) \gg h(n(k))$. Then the picture language \(L_{\mathbf{F},h}\)  is not in $\corec$.
\end{corollary}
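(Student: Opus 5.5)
We argue by contradiction, applying Lemma~\ref{lem:lower-bound} to the complement of \(L_{\mathbf{F},h}\) restricted to the relevant pictures. Suppose \(L_{\mathbf{F},h}\in\corec\), so its complement \(\overline{L}\) (over \(\{0,1,\Box\}\)) is in \(\rec\). The plan is to derive from this a non-deterministic protocol for \(\neg F_k\) of cost \(O(h(k))\). That would give \(\concc(F_k)=O(h(k))\), contradicting the assumption \(\concc(F_k)\gg h\).

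\textbf{Step 1: isolate the well-formed pictures.} The complement \(\overline{L}\) contains many irrelevant pictures: those of wrong size, those whose bottom row contains \(\Box\), and those with a \(0\) or \(1\) above the bottom row. We intersect \(\overline{L}\) with the language \(W\) of pictures of size \(2n(k)\times h(k)\), for \(k\in\mathcal K\), whose bottom row lies in \(\{0,1\}^{2n(k)}\) and whose other rows are blank.

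This intersection is convenient but not strictly needed. The protocol of Lemma~\ref{lem:lower-bound} only ever queries pictures of the form \(P_{x,y}\), and \(P_{x,y}\in\overline{L}\) if and only if \(F_k(x,y)=0\). So we do not need \(W\) to be recognizable. We simply run the same argument with the tiling system for \(\overline{L}\) in place of the one for \(L_{\mathbf{F},h}\).

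\textbf{Step 2: rerun the proof of Lemma~\ref{lem:lower-bound}.} Let \((\Sigma,\Gamma,\Theta,\pi)\) be a tiling system recognizing \(\overline{L}\). Given \((x,y)\), the Referee broadcasts the two central columns \(Z\) of a \(\Theta\)-valid preimage of \(\widehat{P_{x,y}}\). Alice checks that \(Z\) extends to the left to a locally valid half whose projection carries \(x\) and the border symbols. Bob does the same on the right with \(y\).

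The gluing argument is unchanged: every \(2\times2\) window lies within the left part, the right part, or \(Z\). Hence the protocol accepts exactly when \(P_{x,y}\in\overline{L}\), that is, exactly when \(\neg F_k(x,y)=1\). Its cost is \(2(h(k)+2)\lceil\log|\Gamma|\rceil=O(h(k))\).

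\textbf{Step 3: conclude and check parameters.} Step 2 gives \(\concc(F_k)\le O(h(k))\) for all \(k\in\mathcal K\). This contradicts \(\concc(F_k)\gg h\), i.e.\ \(h(k)=o(\concc(F_k))\).

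The only step needing care is notation. The statement writes \(h(n(k))\) while the picture height is \(h(k)\). I would read the hypothesis as comparing \(\concc(F_k)\) with the actual number of rows of the pictures of \(L_{\mathbf{F},h}\) for parameter \(k\), and make that reading explicit. With that settled, the main obstacle is only observing that Lemma~\ref{lem:lower-bound} is insensitive to whether the language is \(L_{\mathbf{F},h}\) or its complement, as long as membership of \(P_{x,y}\) is governed by \(F_k\) or \(\neg F_k\). Equivalently, apply Lemma~\ref{lem:lower-bound} verbatim to the family \(\neg\mathbf{F}=\{\neg F_k\}\), noting that \(\overline{L}\cap W=L_{\neg\mathbf{F},h}\) and that only pictures in \(W\) are ever queried.
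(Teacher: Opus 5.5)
Your proposal is correct and follows the paper's route. The paper just says the corollary follows immediately from Lemma~\ref{lem:lower-bound}, meaning you run its protocol on a tiling system for the complement and obtain $\concc(F_k)=O(h(k))$. The details you add are exactly what the paper leaves implicit: the complement is not literally $L_{\neg\mathbf{F},h}$, but only pictures $P_{x,y}$ are ever queried, so $W$ need not be recognizable; and the hypothesis's $h(n(k))$ should be read as the picture height $h(k)$.
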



\subsection{\boldmath Proof that \(L_{\mathbf{F},h}\) is in \(\urec\)}
\label{sec:proof-positive-claim-short}

In this section  we construct a picture language $L$ (over some finite alphabet $\alphabet$) that is  local in the sense of Definition~\ref{def:local_language},
and define a map
$
\pi:\alphabet\to\{0,1,\Box,\#\}
$
such that the $\pi$-projections of pictures in $L$ are exactly the bordered pictures in $L_{\mathbf{F},h}$, and moreover every picture of $L_{\mathbf{F},h}$ has a unique preimage in $L$. This will prove that
$
L_{\mathbf{F},h}\in\urec.
$
While the construction is not entirely straightforward, its individual ingredients are elementary and should be readily reproducible by specialists in multidimensional symbolic dynamics.
We therefore give only a concise, somewhat schematic description of the construction and its correctness proof; further details are provided in the Appendix.

At a high level, the construction simulates in $L$ the communication protocol
for the proof of \eqref{eq:cc-upper-bound} in Theorem~\ref{th:bounds_for_cc}, which certifies the validity of a picture in $L_{\mathbf{F},h}$.
We represent the alphabet of $L$ as a Cartesian product
$
\alphabet=\alphabet_0\times\alphabet_1\times\cdots\times\alphabet_{7}
$
and describe $L$ layer by layer, where each layer $i$ is the projection onto the component $\alphabet_i$. Most local constraints (i.e., the lists of allowed and forbidden $2\times2$ patterns)
involve a single layer and enforce a prescribed structure on this layer, although some constraints involve several layers.

\smallskip
\emph{Layer 0.}
We set
$
\alphabet_0=\{0,1,\Box,\},
$
and define $\pi:\alphabet \cup \{\#\} \to \{0,1,\Box,\#\}$ to be the restriction onto the $\alphabet_0$-component.
This layer literally contains the picture from $L_{\mathbf{F},h}$ that we wish to certify.

\smallskip
\emph{Layer 1.}
We do not  specify the alphabet $\alphabet_1$ explicitly and only explain the geometric patterns implemented on this layer.
The local rules enforce a regular  structure that contains a ``blank background,''  and on this background there are
vertical lines running from top to bottom, horizontal lines running from left to right, and diagonal lines running from southwest to northeast, each beginning and ending at the boundary of the picture.
The rules allow these lines to intersect only in triple intersections: whenever two of the three types of lines meet, the third one must pass through the same point.
In particular, a vertical and a horizontal line cannot intersect without a diagonal line passing through their intersection.
Further, we refine the local rules so that the boundary of the entire picture lies on the grid lines.

Such a configuration is possible only when the vertical and horizontal lines form a square grid of some fixed spacing $b$, see Figure~\ref{fig:grid}.
At this stage, the grid spacing $b$ is not fixed; its value is  determined below through constraints coupling Layer~1 to the other layers.

\begin{figure}[t]

\begin{minipage}[t]{0.48\textwidth}

\begin{tikzpicture}[x=0.28cm,y=0.28cm,scale=0.35]

\def\nn{32}
\def\hh{24}
\def\bb{8}

\pgfmathtruncatemacro{\WW}{2*\nn}

\foreach \x in {0,...,63} {
  \foreach \y in {0,...,23} {
    \fill[black!5]
      (\x,\y) rectangle ++(1,1);
  }
}

\foreach \y in {0,8,16} {
  \fill[horcolor]
    (0,\y) rectangle (\WW,\y+1);
}

\foreach \x in {0,8,16,24,32,40,48,56} {
  \fill[vertcolor]
    (\x,0) rectangle (\x+1,\hh);
}


\foreach \x in {0,...,63} {
  \foreach \y in {0,...,23} {
    \pgfmathtruncatemacro{\rr}{mod(\x-\y+16,8)} 
    \ifnum\rr=0
      \fill[diagcolor]
        (\x,\y) rectangle ++(1,1);
    \fi
  }
}


\foreach \x in {0,8,16,24,32,40,48,56} {
  \foreach \y in {0,8,16} {
    \fill[nodescolor]
      (\x,\y) rectangle ++(1,1);
  }
}


\draw[step=1,thin,black!55]
  (0,0) grid (\WW,\hh);

\draw[thick]
  (0,0) rectangle (\WW,\hh);

\end{tikzpicture}
\captionof{figure}{Layer~1 imposes on the picture a square grid.}
\label{fig:grid}
\end{minipage}
\hfill
\begin{minipage}[t]{0.48\textwidth}%
    \centering
\begin{tikzpicture}[x=0.28cm,y=0.28cm,scale=0.35]

\def\nn{32}
\def\hh{24}
\def\bb{8}

\pgfmathtruncatemacro{\WW}{2*\nn}

\foreach \x in {0,...,63} {
  \foreach \y in {0,...,23} {
    \fill[black!5]
      (\x,\y) rectangle ++(1,1);
  }
}

\foreach \y in {0,8,16} {
  \fill[horcolor]
    (0,\y) rectangle (\WW,\y+1);
}

\foreach \x in {0,8,16,24,40,48,56} {
  \fill[vertcolor]
    (\x,0) rectangle (\x+1,\hh);
}

\foreach \x in {32} {
  \fill[midcolor]
    (\x,0) rectangle (\x+1,\hh);
}


\foreach \x in {0,...,63} {
  \foreach \y in {0,...,23} {
    \pgfmathtruncatemacro{\rr}{mod(\x-\y+16,8)} 
    \ifnum\rr=0
      \fill[diagcolor]
        (\x,\y) rectangle ++(1,1);
    \fi
  }
}


\foreach \x in {0,8,16,24,32,40,48,56} {
  \foreach \y in {0,8,16} {
    \fill[nodescolor]
      (\x,\y) rectangle ++(1,1);
  }
}


\draw[step=1,thin,black!55]
  (0,0) grid (\WW,\hh);

\draw[thick]
  (0,0) rectangle (\WW,\hh);

\end{tikzpicture}
\captionof{figure}{Layer~2 selects one of the vertical lines in the grid.}
\label{fig:midline}
\end{minipage}
\end{figure}

\smallskip
\emph{Layer 2.}
We distinguish one of the vertical lines of the grid  of Layer~1 (shown in brown in Figure~\ref{fig:midline}) and call it the \emph{Middle Line}.
The local rules of this layer guarantee that there is exactly one distinguished vertical line, although its position is not determined uniquely.
Later additional constraints will force this line to occupy the vertical midline of the picture, with the same number of columns on either side.

\smallskip
\emph{Layer 3.}
The grid from Layer~1 (superimposed on the bottom row of Layer~0) partitions the bits written in the bottom row into consecutive \emph{blocks of $b$ bits}.
We denote these blocks by
$x^{(1)},x^{(2)},\ldots$
on the left from the Middle Line and by $y^{(1)},y^{(2)},\ldots$ on the right from it.
In \emph{Layer 3}, we introduce a geometric structure assigning an \emph{index} to each such block. The indices are represented by simple unary counters and enumerate the blocks from left to right, starting at $1$.
Next to the Middle Line of Layer~2, the counter is reset to $1$.
Thus, every $b$-bit block in the bottom row of Layer~0 is assigned an index.
In Figure~\ref{fig:block_indices} we illustrate the idea of a geometric construction that enforces such a structure.
The dashed lines represent ``signal propagation'' that identifies the position of the $i$-th cell in the $i$-th block; these positions encode the counter values.

This construction naturally provides us with another useful feature: we obtain a distinguished horizontal row whose height (let use denote it $max\_counter$)
represents the maximal counter value (the bold blue line in Figure~\ref{fig:block_indices}).
We require the counter value represented by this height to coincide with the indices of the rightmost bit blocks in both the left and the right halves of the picture.
Consequently, the two halves contain the same number of  $b$-bit blocks, and hence the Middle Line indeed bisects the picture.
We will use this horizontal line once more when discussing Layer~7.

\smallskip
\emph{Layer 4.}
On this layer,  some of the $b$-bit blocks are marked as \emph{selected}.
The bits of each selected block, together with its index, are transmitted by a system of ``wires'' to the vertical Middle Line from Layer~2.
The local rules ensure that blocks with  the same indices have to be selected on the two sides of the Middle Line.
Thus, for some
$i_1,\ldots,i_t,$
the selected blocks are
$x^{(i_1)},\ldots,x^{(i_t)}$
on the left and
$y^{(i_1)},\ldots,y^{(i_t)}$
on the right, see Figure~\ref{fig:bittransmission}.

\smallskip
\emph{Layer 5.}
This layer processes the bits of the selected blocks, which are  transmitted to the Middle Line in Layer~4.
Namely,  the local rules implement a running-sum computation of the inner product, over $\mathbb F_2$,
for each selected pair $(x^{(i_j)},y^{(i_j)}),$ for $j=1,\ldots,t$.
The corresponding inner-product values are ``recorded'' along the Middle Line, at the top of each block. 

\smallskip
\emph{Layer 6.}
This layer again implements signal propagation along geometric structures that plays the role of ``communication wires,'' see Figure.~\ref{fig:bittransmission}.
As a result, we collect the following data compactly in the bottom-left corner of the picture, with little empty space between meaningful bits:
\begin{itemize}
\item[(i)] the indices of the bit blocks selected in Layer~4 (in what follows we denote this piece of information by ${select}\_{ind}$);  we use the fact that these indices have already been transmitted to the Middle Line;
\item[(ii)] the inner-product values computed in Layer~5 for the corresponding pairs of selected blocks (in what follows we denote this piece of information by ${inner\_prod}$),
see Figure.~\ref{fig:certificatetransmission}.
\end{itemize}

Observe that both pieces of information ${select}\_{ind}$ and  ${inner\_prod}$ are bit strings that are very short compared with $b$.

\smallskip
\emph{Layer 7.}
This layer represents a space-time diagram of a Turing machine.
This machine operates on finitely many \emph{tapes}, and
on the tapes there are moving \emph{heads}, which read and write symbols on the tapes' cells
(we may allow several heads in one tape).
We specify below the input supplied to this machine and the computation it performs.

We implement  space-time diagram  where \emph{space} corresponds to the horizontal rows and \emph{time} goes from the bottom to the top.
We assume that  every horizontal row of the picture represents the instant ``photograph'' of a Turing machine, i.e., a configuration that includes
the data written on the tapes, the positions of the heads, and the internal state of the machine.
The coherence of such a diagram can be easily implemented as a local picture language, see e.g., \cite{levin1996fundamentals,van1997convenience,lewis1998elements}.
The index of the row (from the bottom to the top) corresponds to the step of the represented computation.
We assume that the computation starts when the heads on all tapes are in the first cell (the leftmost tile of the row).
The local rules ensure that a picture is valid only if the represented computation  terminates successfully.
We now specify the computation performed by this Turing machine. We assume this is a machine with $9$ tapes:
\begin{itemize}
\item one read-only tape contains the unary representation of an integer denoted by $k$;
its value is guessed non-deterministically in bottom row of the diagram and is kept intact;

\item one tape takes as an input the gap between the two first vertical lines of the grid of Layer~1 (which can be interpreted as the parameter $b$ in the unary expansion);

\item two other tapes take as their inputs the bit strings ${select}\_{ind}$ and  ${inner\_prod}$ ;
thanks to Layer~6 these data appear at the left-bottom corner of the picture, so we can plug them into the machine's tapes.

\item  one tape uses input $k$ to count from $1$ to $b=b(k)$
(recall that $b  = b(k)$ is a fully time-constructible function)
and makes sure that the values of $k$ and $b$ match as specified in the definition of $L_{\mathbf{F},h}$;

\item  another tape uses input $k$ to count from $1$ to $h=h(k)$
(recall that $h  = h(k)$ is also a fully time-constructible function)
and makes sure that this run  is over exactly at the top row for the picture (the last row of the space-time diagram);

\item one more tape is used to check that the number of $b$-bit blocks encoded in both halves of the picture
(the number of blocks $x^{(i)}$ and $y^{(j)}$) is equal to $k$;
to this end, we check that the height  of the row $max\_counter$ (see Layer~3) is equal to $k$ (kept on the first tape);
\item the last tape uses ${select}\_{ind}$ and ${inner\_prod}$ to certify that the function defined in \eqref{eq:def-F}
is equal to $1$ on the bit strings embedded into the bottom row of the picture.
In what follows we discuss this part of the construction in more detail.
\end{itemize}
\begin{enumerate}
\item  Using the last working tape, the machine first checks that $k$ actually belongs to the set of integers $\mathcal{K}$ indexing the family $\mathbf{f} = \{f_k\}_{k \in \mathcal{K}}$ fixed in \Cref{sec:construction}.
\item Then the machine enumerates the list of all DNF formulas on $k$ variables,
and selects from this enumeration the unambiguous DNF $\varphi$ that represents  the Boolean function $f_k$ and has the minimal possible cost
(if there are several such DNF, it takes the lexicographically first one).
This search is huge in terms of $k$ (there are $2^{2^{\bigO( k )}}$ such formulas) but small compared with $b(k)$.
\item Finally, the machine checks that the indices encoded in ${select}\_{ind}$ correspond exactly to the list of variables in one clause in $\varphi$,
and the partial valuation assigning to these variables the truth values from ${inner\_prod}$ makes the clause (and the entire $\varphi$) true.
\end{enumerate}
If any of these checks fails, the machine should stop and reject the input; otherwise, it accepts.
We may assume that the computation implemented on the last tape  halts in time $\ll h(k)$.
Therefore, we have enough room to embed  this space-time diagram  in the picture of size $h\times (2n)$.

We thus define the \emph{computation layer} to be the set of \emph{accepting} space-time diagrams of this Turing machine.
As mentioned above, such space-time diagrams can be defined by restrictions for $2\times 2$ blocks, so we define it as a local picture language.
Furthermore, in every valid picture $p \in L$, the input tapes of the machine are synchronized with the size of the picture  and the bits strings $(x_1\ldots x_n, y_1\ldots y_n)$.
It remains to verify that the constructed layers together provide an \emph{unambiguous} local language for $ L_{\mathbf{F},h}$.

\paragraph*{Final part of the proof: }

We now can prove that $L_{\mathbf{F},h}$ also belongs to \(\urec\).
For every $p\in L_{\mathbf{F},h}$ there is a $p'\in L$ such that $\pi(p')$ is the bordered picture of $p$.
Indeed, let $k\in \cal K$ be the parameter corresponding to the picture $p$.
We put the bits of $p$ in the Layer~0 of $p'$, and reconstruct the content of Layers~1--7.
There are several non-deterministic choices  in this reconstruction:
\begin{enumerate}
\item the position of the Middle Line in Layer~2 (we chose to put it actually in the middle of the picture);
\item the selected blocks in Layer~4 (we select the pairs of  blocks
\[
(x^{(i_1)}, y^{(i_1)}), \ldots (x^{(i_t)}, y^{(i_t)})
\]
with the indices $(i_1,\ldots, i_t)$ taken from the certificate ensuring $f_k(z_1,\ldots z_k) = 1$, where
\(
z_j = g(x^{(j)}, y^{(j)}),\ j=1,\ldots, k,
\)
are the values of the inner product of the selected blocks);
\item the integer $k$ fed into the first tape of the Turing machine (we choose the value of $k$ matching the size of the picture
and, consequently, the length $n$ of the bit strings $x_1\ldots x_n, y_1,\ldots, y_n$ embedded in Layer~0).
\end{enumerate}
Thus, we obtain a picture $p'$ that respects the local constraints of $L$ and such that $\pi(p') = p$.

It remains to observe that such a $p'$ is unique since any other choice  in the three aforementioned steps would eventually lead to contradictions with local constraints of $L$.
Note that unambiguity of the second of these steps (the choice of selected blocks) follows from unambiguity of the DNF $\varphi$ that represents $f_k$.
This concludes the sketch of the proof.

\section{Conclusion}

The proof of $\urec\setminus\corec\neq\varnothing$ presented in this paper relies on a rather artificial, ad hoc construction of a picture language,
and is consequently somewhat technical.
The proof involves several geometric gadgets as well as embedding of a space-time diagram of a Turing-machine.
There remains an interesting open problem to find a more natural example of a language in $\urec\setminus\corec$, with a simpler proof of membership in \(\urec\).

Another natural problem is to determine whether $\urec$ is closed under union.

It would also be interesting to investigate to what extent the method developed in this paper can be adapted to study properties of sofic shifts over multidimensional lattices.

\begin{figure}[H]
  \centering
  \begin{tikzpicture}[scale=.22]

    \begin{scope}
      \clip (0,0) rectangle ++(64,24);
      \foreach \i in {0,8,...,28} {
        \draw[vertcolor,very thick] ($(\i,0) + (0,0)$) -- ++(0,24);
        \draw[vertcolor,very thick] ($(\i,0) + (32+8,0)$) -- ++(0,24);
      }
      \clip (0,0) rectangle ++(64,24);
      \foreach \j in {0,8,...,24} {
        \draw[horcolor,very thick] (0,\j) -- ++(64,0);
      }
      \draw[very thick,countbcolor] (0,3.75) -- ++(64,0);
      \draw[thick,densely dashed,countkcolor] (0.5,0.5) foreach \k in {1,2,3} {-- ++(8,0) -- ++(0,1)} -- ++(7,0);
      \foreach \k in {1,2,3} {
        \draw[thick,densely dashed,countkcolor2] ($(0.5,0.5) + (8*\k,\k)$) -- ++(\k,-\k);
        \draw[thick,densely dashed,countkcolor2] ($(32+0.5,0.5) + (8*\k,\k)$) -- ++(\k,-\k);
      }
      \draw[thick,densely dashed,countkcolor] ($(32+0.5,0.5)$) foreach \k in {1,2,3} {-- ++(8,0) -- ++(0,1)} -- ++(7,0);
      \foreach \k in {0,1,2,3} {
        \pgfmathsetmacro{\i}{9*\k}
        \draw[unarycountercolor] (\i,0) rectangle ++(1,1);
        \draw[unarycountercolor] (\i,0) -- ++(1,1) (\i,1) -- ++(1,-1);
        \draw[unarycountercolor] ($(32+\i,0)$) rectangle ++(1,1);
        \draw[unarycountercolor] ($(32+\i,0)$) -- ++(1,1) ($(32+\i,1)$) -- ++(1,-1);
      }
      \draw[ultra thick,midcolor] (32,0) -- ++(0,24);
    \end{scope}
    \draw[TilingGrid,opacity=.7] (0,0) grid ++(64,24);
    \draw[{Latex[length=3pt]}-{Latex[length=3pt]},font=\small] (64.5,8) -- ++(0,8) node[midway,right] {$b$};
    \draw[{Latex[length=3pt]}-{Latex[length=3pt]},font=\small] (64.5,0) -- ++(0,4) node[midway,right] {$k$};
    \foreach \k in {0,1,2,3} {
      \pgfmathtruncatemacro{\kn}{\k+1}
      \draw[{Latex[length=3pt]}-{Latex[length=3pt]},font=\small] ($(8*\k-.1,-.5)$) -- ++($(\k+1.2,0)$) node[midway,below] {$\kn$};
      \draw[{Latex[length=3pt]}-{Latex[length=3pt]},font=\small] ($(32+8*\k-.1,-.5)$) -- ++($(\k+1.2,0)$) node[midway,below] {$\kn$};
    }
  \end{tikzpicture}
    \caption{The dashed lines represent a ``signal'' propagating from left to right, moving one step upward each time it crosses a vertical $\textcolor{vertcolor}{\blacksquare}$ line,
    and then being reflected toward the bottom row to place the corresponding $\crosssquare$ unary marker. These markers serve as indices of the $b$-bit blocks.}
  \label{fig:block_indices}
\end{figure}

\begin{figure}[H]
  \centering

  \begin{tikzpicture}[
    scale=0.70,
    transform shape,
    line cap=round,
    line join=round,
    font=\small
    ]
    \def\k{6}
    \def\b{6}
    \def\pas{.25}
    \pgfmathsetmacro{\db}{\b*\pas}
    \pgfmathsetmacro{\W}{\db*\k}
    \def\H{5.0}      

    \colorlet{col1}{blue!60!black}
    \colorlet{col4}{red!60!black}
    \colorlet{col5}{cyan!60!black} 

    \tikzset{
      on each segment/.style={
        decorate,
        decoration={
          show path construction,
          lineto code={
            \path [#1]
            (\tikzinputsegmentfirst) -- (\tikzinputsegmentlast);
          },
        },
      },
      marrow/.style={postaction={decorate,decoration={
            markings,
            mark=at position .5 with {\arrow[#1]{Latex[length=1mm, width=1mm]}}
          }}},
    } 

    \pgfmathtruncatemacro{\kp}{\k-1}
    \foreach \c in {0,...,\kp} {
      \draw[vertcolor,very thick]  (\c*\db,0) --++ (0,\H);
    }
    \fill[green!20!lightgray] (0,0) rectangle (\W,\pas);

    \foreach \c [count=\i from 0] in {5,4,1} {
      \fill[{col\c}]  (\c*\db-\db,0) rectangle (\c*\db,\pas) ;
      \fill[{col\c}]  (\W-\pas,\i*\db) rectangle ++ (\pas,\db);
      \fill[{col\c}]  (\W,\i*\db) rectangle ++ (\pas,\db);

    }

    \draw[TilingGrid,xstep=\pas,ystep=\pas] (0,0) grid (\W,\H);

    \foreach \c/\e in {1/1,4/2,5/t} {
      \draw[
      decorate,
      decoration={brace, mirror, amplitude=3pt},
      col\c
      ]
      (\pas*.5+\c*\db-\db,-\pas) --++ (\db-\pas,0)
      node[midway, yshift=-10pt]
      {select\,\(x^{(i_\e)}\)};
    }

    \foreach \ind in {1,...,\b} {
      \draw[col1,postaction={on each segment={marrow}}]  (-\pas*.5+\ind*\pas,\pas*.5) -- (-\pas*.5+1*\db,\db-\ind*\pas+\pas*.5) --++ (\db,0) --++ (\db,0) --++ (\db,\db) --++ (\db,\db) --++ (\db,0);
      \draw[col4,postaction={on each segment={marrow}}]  (-\pas*.5+3*\db+\ind*\pas,\pas*.5) -- (-\pas*.5+4*\db,\db-\ind*\pas+\pas*.5)--++ (\db,\db) --++ (\db,0);
      \draw[col5,postaction={on each segment={marrow}}]  (-\pas*.5+4*\db+\ind*\pas,\pas*.5) -- (-\pas*.5+5*\db,\db-\ind*\pas+\pas*.5) --++ (\db,0);
    }

    \begin{scope}[shift={(\W,0)}]
      \foreach \c in {1,...,\k} {
        \draw[vertcolor,very thick]  (\c*\db,0) --++ (0,\H);
      }
      \fill[green!20!lightgray] (0,0) rectangle (\W,\pas);

      \foreach \c [count=\i from 0] in {5,4,1} {
        \fill[{col\c}]  (\c*\db-\db,0) rectangle (\c*\db,\pas) ;
        \fill[{col\c}]  (\W-\pas,\i*\db) rectangle ++ (\pas,\db);

      }

      \draw[TilingGrid,xstep=\pas,ystep=\pas] (0,0) grid (\W,\H);

      \foreach \c/\e in {1/1,4/2,5/t} {
        \draw[
        decorate,
        decoration={brace, mirror, amplitude=3pt},
        col\c
        ]
        (\pas*.5+\c*\db-\db,-\pas) --++ (\db-\pas,0)
        node[midway, yshift=-10pt]
        {select\,\(y^{(i_\e)}\)};
      }


      \foreach \ind in {1,...,\b} {
        \draw[col1,postaction={on each segment={marrow}}]  (-\pas*.5+\ind*\pas,\pas*.5) -- (-\pas*.5+1*\db,\db-\ind*\pas+\pas*.5) --++ (\db,0) --++ (\db,0) --++ (\db,\db) --++ (\db,\db) --++(\db,0);
        \draw[col4,postaction={on each segment={marrow}}]  (-\pas*.5+3*\db+\ind*\pas,\pas*.5) -- (-\pas*.5+4*\db,\db-\ind*\pas+\pas*.5)--++ (\db,\db)--++(\db,0);
        \draw[col5,postaction={on each segment={marrow}}]  (-\pas*.5+4*\db+\ind*\pas,\pas*.5) -- (-\pas*.5+5*\db,\db-\ind*\pas+\pas*.5)--++(\db,0);
        \draw[col1,postaction={on each segment={marrow}},dashed,very thin] (\W-\pas*.5,\ind*\pas-\pas*.25+2*\db)--++ (-\W+\pas,0);
        \draw[col4,postaction={on each segment={marrow}},dashed,very thin] (\W-\pas*.5,\ind*\pas-\pas*.25+1*\db)--++ (-\W+\pas,0);
        \draw[col5,postaction={on each segment={marrow}},dashed,very thin] (\W-\pas*.5,\ind*\pas-\pas*.25+0*\db)--++ (-\W+\pas,0);

      }
    \end{scope}
    \draw[very thick,midcolor] (\W,0) --++(0,\H);
    \node[yshift=10pt,midcolor] at (\W,\H) {Middle Line};
    \foreach \i in {1,...,\k}{
      \draw[unarycountercolor] (\i*\db-\db+\i*\pas-\pas,0) rectangle +(\pas,\pas) (\i*\db-\db+\i*\pas-\pas,0) -- ++(\pas,\pas) ++(0,-\pas) --++(-\pas,\pas);
      \draw[unarycountercolor] (\i*\db-\db+\i*\pas-\pas+\W,0) rectangle +(\pas,\pas) (\i*\db-\db+\i*\pas-\pas+\W,0) -- ++(\pas,\pas) ++(0,-\pas) --++(-\pas,\pas);
    }
    \foreach \i/\j in {1/1,4/2,5/3}{
      \draw[unarycountercolor] (\W,4*\db-\j*\db-\i*\pas) rectangle +(\pas,\pas) (\W,4*\db-\j*\db-\i*\pas) -- ++(\pas,\pas) ++(0,-\pas) --++(-\pas,\pas);
      \draw[unarycountercolor] (\W-\pas,4*\db-\j*\db-\i*\pas) rectangle +(\pas,\pas) (\W-\pas,4*\db-\j*\db-\i*\pas) -- ++(\pas,\pas) ++(0,-\pas) --++(-\pas,\pas);
      \draw[unarycountercolor] (2*\W-\pas,4*\db-\j*\db-\i*\pas) rectangle +(\pas,\pas) (2*\W-\pas,4*\db-\j*\db-\i*\pas) -- ++(\pas,\pas) ++(0,-\pas) --++(-\pas,\pas);

    }
  \end{tikzpicture}
  \caption{The selected blocks of bits $x^{(i_j)}$ are transmitted to the vertical Middle Line of color~\(\textcolor{midcolor}{\blacksquare}\).
  To this end, we implement ``wires''  which head to the right or diagonally, until they arrive to the target vertical line.
  The value of the input bits from the selected block  (from Layer~1) together with the unary counter markers \( \crosssquare \) (from Layer~3)
  spread along the corresponding  wires,  from the bottom row until the terminal tile of the wire.
  A similar scheme is used  to transmit the selected  $y^{(i_j)}$.}
  \label{fig:bittransmission}
\end{figure}

\begin{figure}[H]
  \centering

  \begin{tikzpicture}[
    scale=0.70,
    transform shape,
    line cap=round,
    line join=round,
    font=\small
    ]
    \def\k{6}
    \def\b{6}
    \def\pas{.25}
    \pgfmathsetmacro{\db}{\b*\pas}
    \pgfmathsetmacro{\W}{\db*\k}
    \def\H{5.0}      
    \pgfmathsetmacro{\maxcols}{\H/\pas/\b}
    \colorlet{col1}{blue!60!black}
    \colorlet{col4}{red!60!black}
    \colorlet{col5}{cyan!60!black} 

    \tikzset{
      on each segment/.style={
        decorate,
        decoration={
          show path construction,
          lineto code={
            \path [#1]
            (\tikzinputsegmentfirst) -- (\tikzinputsegmentlast);
          },
        },
      },
      marrow/.style={postaction={decorate,decoration={
            markings,
            mark=at position .5 with {\arrow[#1]{Latex[length=1mm, width=1mm]}}
          }}},
    } 

    \foreach \c [count=\i from 0] in {5,4,1} {
      \fill[{col\c}]  (\W-\pas,\i*\db) rectangle ++ (\pas,\db);
      \fill[{col\c}]  (\W,\i*\db) rectangle ++ (\pas,\db);

    }

    \foreach \c in {0,...,\k} {
      \draw[vertcolor,very thick]  (\c*\db,0) --++ (0,\H);
    }

    \foreach \c in {0,...,\maxcols} {
      \draw[horcolor,very thick]  (0,\c*\db) --++ (\W,0);
    }
    \draw[midcolor,very thick]  (\W,0) --++ (0,\H);
    \fill[innerproductcolor] (0,0) rectangle +(3*\pas,\pas);
    \draw[TilingGrid,opacity=.6,xstep=\pas,ystep=\pas] (0,0) grid (\W+\pas,\H);

    \begin{scope}[shift={(0,-.075)}]
      \foreach \ind in {1,2}{
        \draw[innerproductcolor!80!black,-latex,rounded corners, postaction={on each segment={marrow}}]  (\W-\pas*.5,\ind*\db+\pas*.5) --++ (-\W+3*\pas-\ind*\pas,0) --++ (0,-\ind*\db);
      }
      \draw[innerproductcolor!80!black,-latex,rounded corners, postaction={on each segment={marrow}}]  (\W-\pas*.5,\pas*.5) --++ (-\W+3*\pas,0);
    \end{scope}

    \foreach \c in {1,...,\maxcols}{
      \draw[unarycountercolor,thick,dashed] (0,\c*\db) -- +(\db,-\db);
    }
    \node[yshift=10pt,midcolor] at (\W,\H) {Middle Line};

    \foreach \i/\j in {1/1,4/2,5/3}{
      \draw[unarycountercolor,thick] (\W-\pas,4*\db-\j*\db-\i*\pas) rectangle +(\pas,\pas);  
      \draw[innerproductcolor,very thick] (\W-\pas,\j*\db-\db) rectangle +(\pas,\pas);
      \draw[unarycountercolor] (\W-\pas,4*\db-\j*\db-\i*\pas) rectangle +(\pas,\pas) (\W-\pas,4*\db-\j*\db-\i*\pas) -- ++(\pas,\pas) ++(0,-\pas) --++(-\pas,\pas);

      \draw[unarycountercolor!80!black, postaction={on each segment={marrow}}]  (\W-\pas*.5-.2*\pas,4*\db-\j*\db-\i*\pas+\pas*.5+.2*\pas) --++ (-\W+\i*\pas,0) --++(0,-4*\db+\j*\db+\i*\pas);
      \draw[unarycountercolor,very thick] (\i*\pas-\pas,0) rectangle +(\pas,\pas);
    }

  \end{tikzpicture}
  \caption{Geometric scheme transmitting the inner-product results and the selected columns to the bottom-left corner of the picture. The result bits of the inner products, marked by \(\textcolor{innerproductcolor}{\Box}\), are moved horizontally and then downwards. The unitary markers \(\crosssquare\)
(corresponding to the selected indices) are moved horizontally until they cross a dashed diagonal, and then downwards, thus forming a bitmap of size $k$ containing ones at the selected positions and zeroes otherwise.}
  \label{fig:certificatetransmission}
\end{figure}

\bibliography{urec}

\end{document}